\documentclass[a4paper]{article}
\usepackage{amsfonts,amssymb,amsmath,amsthm,cite}
\usepackage{ucs} 
\usepackage[utf8x]{inputenc}
\usepackage{graphicx}
\usepackage[english]{babel}
\usepackage{slashed}
\usepackage{textcomp}
\usepackage{bm}

\textheight=22.5cm
\textwidth=16.cm
\oddsidemargin=0cm
\evensidemargin=\oddsidemargin
\topmargin=0.6cm
\topskip=0cm
\headheight=0.6cm
\headsep=0.2cm

\newtheorem{theorem}{Theorem}
\newtheorem{lemma}{Lemma}

\begin{document}
\begin{center}
	{\Large\textbf{Explicit Cutoff Regularization in Coordinate Representation}}
	\vspace{0.5cm}
	
	{\large Aleksandr~V.~Ivanov}
	
	\vspace{0.5cm}
	
	{\it St. Petersburg Department of Steklov Mathematical Institute of
		Russian Academy of Sciences,}\\{\it 27 Fontanka, St. Petersburg 191023, Russia}\\
	{\it Leonhard Euler International Mathematical Institute, 10 Pesochnaya nab.,}\\
	{\it St. Petersburg 197022, Russia}\\
	{\it E-mail: regul1@mail.ru}
\end{center}
\vskip 10mm
\date{\vskip 20mm}
%\begin{flushright}
%	\Large{{\calligra Dedicated to our parents}}
%\end{flushright}

\begin{abstract}
In this paper, we study a special type of cutoff regularization in the coordinate representation. We show how this approach unites such concepts and properties as an explicit cut, a spectral representation, a homogenization, and a covariance. Besides that, we present new formulae to work with the regularization and give additional calculations of the infrared asymptotics for some regularized Green's functions, appearing in the pure four-dimensional Yang--Mills theory and in the standard two-dimensional Sigma-model.
\end{abstract}
\vskip 5mm
\small
\noindent\textbf{Key words and phrases:} Cutoff regularization, homogenization, spectral representation, covariance, Green's function, infrared asymptotics.
\normalsize
\tableofcontents

%\newpage

\section{Introduction}
Divergent integrals appear in physical and mathematical calculations quite frequently. For example, the perturbative quantum field theory \cite{9,10} is based on studying of Feynman diagrams, which contains divergent quantities of different types due to a "bad" infrared behaviour of Green's functions (propagators). Fortunately, the renormalization theory \cite{6,7} was created that allows us to obtain physically meaningful results from the divergencies. However, such approach depends on a regularization scheme, see \cite{105}. There are a lot of ways to perform it, but we are interested in only one of them, so-called a cutoff regularization.

In this work we are going to discuss the cutoff regularization of a special type. We follow the papers \cite{Ivanov-Kharuk-2019,Ivanov-Kharuk-2020,Ivanov-Kharuk-2022} and introduce an explicit cut in the coordinate representation. However, we want the regularization to possess and unite some additional properties, such as an explicit cutoff procedure, a spectral representation, a homogenization, and a covariance. All these concepts and conditions can be satisfied. In the next section we will explain our point of view in detail. Description of some other cutoff regularizations can be found in \cite{w6,w7,w8}.

The paper has the following structure. First of all, in Section \ref{1:sec:aprd}, we describe the main ideas of our approach and explain some conditions, which we want to satisfy. Then, in Section \ref{1:sec:prst}, we make the first steps to realize the project. In the section a one-dimensional case is studied. In Section \ref{1:sec:gener}, we calculate necessary spectral functions to give the formulation for an arbitrary dimension value. In Section \ref{1:sec:fur}, we consider some generalizations and study cases, applicable to concrete models, such as the pure four-dimensional Yang--Mills theory and the standard two-dimensional Sigma-model, see Theorems \ref{1:cal:th1} and \ref{1:cal:th2}. Some additional remarks and discussion are presented in Section \ref{1:sec:conc}.

\section{Approach description}
\label{1:sec:aprd}
Let us consider a Laplace-type operator $L$ on $\mathbb{R}^d$, where $d\in\mathbb{N}$, with smooth coefficients. Then, we introduce eigenvalues $\lambda$ and eigenfunctions
$\phi_\lambda$ for the operator, such that $L(x)\phi_\lambda(x)=\lambda\phi_\lambda(x)$ for all $x\in\mathbb{R}^d$. Besides this we assume that all $\lambda>0$. Of course, we have implied the presence of suitable boundary conditions and a space of solutions, so that the problem is correct.
We do not discuss the spectral problem in detail here, because we want to explain the main idea of our approach. However, we will assume that the Green's function $G$ exists and its kernel has the following representation
\begin{equation}\label{1:int2}
G(x,y)=
\int_{\mathbb{R}_+}d\mu(\lambda)\,
\phi^{\phantom{*}}_\lambda(x)
\frac{1}{\lambda^2}
\phi^*_\lambda(y),\,\,\,x,y\in\mathbb{R}^d,
\end{equation}
where the integration is performed by using a measure $d\mu(\lambda)$ that corresponds to the operator $L$, see \cite{Birman}. The star denotes the hermitian conjugation. Let us note that if the last measure contains only a discrete spectrum, then we can replace the integration by a sum over all eigenvalues with respect of their multiplicities.

As it was mentioned in the introduction, Green's functions play an important role in the perturbative quantum field theory, because they are constructive blocks for Feynman diagrams. So, they can appear under an integration in different nonlinear combinations. This fact leads to the appearance of divergences, which permeate modern theoretical physics. For example, on the four-dimensional space, the right hand side of formula (\ref{1:int2}) has the asymptotics $(1+o(1))/(4\pi|x-y|^2)$ in the range $x\sim y$. Therefore, the Green's function has the singularity at $x=y$. Such situation arises in the Yang--Mills theory, for instance.

To avoid the singularities and, moreover, divergences, we need to introduce some type of regularization. There are a lot of ways to perform this, but we are interested in a cutoff regularization in the coordinate representation. Let us describe several conditions we are going to take into account.

$(i)$ \textbf{Explicit cutoff.} First of all, the regularization should have a clear cutting procedure. In the introduction, we have cited some ways. However, now we suggest to use the one studied and successfully used in the works \cite{Ivanov-Kharuk-2019,Ivanov-Kharuk-2020,Ivanov-Kharuk-2022}. In the next sections we give precise definitions, whereas here we want to consider one example. Returning to the four-dimensional case mentioned above, the procedure means that the main term of the Green's function has the following deformation
\begin{equation}\label{1:int3}
|x-y|_{\Lambda}=\begin{cases}
|x-y|,&|x-y|>1/\Lambda;\\
\,\,\,1/\Lambda,&|x-y|\leqslant 1/\Lambda,
\end{cases}
\end{equation}
where the limit $\Lambda\to+\infty$ removes the regularization. The non-leading terms also have deformations, but they contain ambiguities that should be controlled by additional conditions, described below. These ambiguities appear due to the ability to add to the deformed Green's function an additional function $g_\Lambda(x,y)$, such that $Lg_\Lambda\to0$ for $\Lambda\to+\infty$ in the sense of generalized functions \cite{Gelfand-1964,Vladimirov-2002}.

$(ii)$ \textbf{Spectral representation.} The procedure should have an explicit spectral meaning, because we want to know how spectral functions of the operator $L$ are deformed. The most expected way is to present the regularization by an operator of integration $\mathfrak{J}$, the kernel $\mathfrak{J}(x,y)$ of which is equal to the following integral
\begin{equation}\label{1:int1}
\mathfrak{J}_\Lambda(x,y)=\int_{\mathbb{R}_+}d\mu(\lambda)\,
\phi^{\phantom{*}}_\lambda(x)
\rho(\lambda/\Lambda)\phi^*_\lambda(y),
\end{equation}
where the function $\rho(\cdot)$ has properties, sufficient to work with the representation. It is clear that we need the equality $\rho(0)=1$ holds, because the limit transition $\Lambda\to+\infty$ removes the regularization.

$(iii)$ \textbf{Homogenization.} Another question that arises during the construction of the cutoff regularization concerns the possibility of its representation by a "classical" integration operator. In addition, we would like this integration operator, if it exists, to be a homogenization operator by a sphere near the point under study. For example, in the four-dimensional case mentioned above, this question can be reformulated for the first order as follows: is there such a smooth kernel $\omega(y)$ that the relations hold
\begin{equation}\label{1:int4}
\frac{1}{4\pi|x|^2_\Lambda}=\frac{1}{S_3}\int_{\mathbb{S}_{\phantom{1}}^3}d\tilde{\sigma}(y)\,
\frac{\omega(y)}{4\pi|y/\Lambda+x|^2},\,\,\,
\frac{1}{S_3}\int_{\mathbb{S}_{\phantom{1}}^3}d\tilde{\sigma}(y)\omega(y)=1,
\end{equation}
where $\mathbb{S}_{\phantom{1}}^3$ is the unit sphere in $\mathbb{R}^4$ at the origin, $S_3$ is its surface area, and $d\tilde{\sigma}(y)$ is a measure on the sphere? The answer is positive, as we will see it below.

$(iv)$ \textbf{Covariance.} Here, we want to say that the kernel $\mathfrak{J}_\Lambda(x,y)$ should change covariantly with respect to gauge transformations, if, of course, such type of transformations is applicable to the operator $L$. Actually, this follows from decomposition (\ref{1:int1}) and, therefore, the kernel $\mathfrak{J}_\Lambda(x,y)$ inherits properties of the operator $L(x)$. Indeed, from the change
\begin{equation}\label{1:int5}
L(x)\to U(x)L(x)U^{*}(x)\,\,\,\mbox{we obtain}\,\,\,
\mathfrak{J}_\Lambda(x,y)\to U(x)\mathfrak{J}_\Lambda(x,y)U^{*}(y)
\end{equation}
due to the appropriate change of the eigenfunctions $\phi_\lambda(x)\to U(x)\phi_\lambda(x)$, where $U(x)$ is a smooth element of a transformation group.

As we know, Lagrangian densities for the standard field theories are local, because they contain a finite number of derivatives. At the same time the property $(ii)$ can lead to losing the locality, while the condition $(iii)$ allows us to keep the quasi-locality. Indeed, we can replace the regularization in form (\ref{1:int4}) from the operator to the field, and, hence, we obtain the local homogenization.

\section{First attempts}
\label{1:sec:prst}
In this section we consider and demonstrate one simple example, which shows the main idea of the approach. Let us introduce the standard Laplace operator $A(s)=-\partial_s\partial_s$ on the real axis $\mathbb{R}$, which has the twofold continuous spectrum. It means that the equation $A(s)\psi(s)=\lambda^2\psi(s)$ has two oscillating solutions $\exp(\pm i\lambda s)$ for all $\lambda\in\mathbb{R}_+$, which form a kernel for the Fourier transform and, therefore, lead to the momentum representation, see \cite{Vladimirov-1981}.

It is known that the equation $A(s)g(s)=\delta(s)$ has a lot of solutions, which are called fundamental solutions. Now we are interested in $g(s)$ of a special type, namely $G(|s|)=-|s|/2$. It is easy to verify that other fundamental solutions can be obtained from $G(|s|)$ by adding so-called "zero modes" in the form $g(s)=G(|s|)+as+b$, where $a,b\in\mathbb{R}$.

Actually, the function $|s|$ is not very "good" in terms of further investigation of the function $|s|^{-k}$, where $k\in\mathbb{N}$, because it goes to $+\infty$ for $s\to\pm0$. One of the methods to avoid such type of singularities is to apply a cutoff regularization. As it was mentioned above, there are a lot of ways to introduce the one, but the simplest approach consists in the following change
\begin{equation}\label{1:cut1}
|s|\to|s|_{\Lambda}=
\begin{cases}
\,\,|s|,&|s|>1/\Lambda;\\
1/\Lambda,&|s|\leqslant 1/\Lambda,
\end{cases}
\end{equation}
where $\Lambda>0$ is an auxiliary regularization parameter that is quite large.

In this case the solution $G(|s|)$ is transformed into the regularized one $G(|s|_{\Lambda})$, which satisfies the following limit transition $G(|s|_{\Lambda})\to G(|s|)$ for $\Lambda\to+\infty$ in the sense of generalized functions \cite{Gelfand-1964, Vladimirov-2002}.

Let us find the deformation of the second derivative
\begin{equation}\label{1:cut2}
A(s)G(|s|_{\Lambda})=\frac{\partial_s}{2}
\begin{cases}
\mathrm{sign}(s),&|s|>1/\Lambda\\
\,\,\,\,\,\,\,0,&|s|\leqslant 1/\Lambda
\end{cases}=\frac{1}{2}\sum_{\sigma=\pm 1}\delta(s-\sigma/\Lambda)=
\frac{1}{S_0}\int_{\mathbb{S}_{\phantom{1}}^{0}}d\sigma(r)\,\delta(s-r/\Lambda),
\end{equation}
where the surface area $S_0=2$ for the $0$-dimensional sphere $\mathbb{S}_{\phantom{1}}^{0}=\{-1,1\}$, and the measure is defined by the last equation. The last calculation means that the $\delta$-functional gets the deformation, and it is equal to the average value of all the $\delta$-functionals on the sphere. We can say that we have obtained some type of homogenization of the standard $\delta$-functional.

This type of homogenization is very remarkable and it plays an important role in the approach. Let us pay attention to the fact that on the real axis, we have the standard Fourier transform. So, a smooth function $f\in C_0^{\infty}(\mathbb{R})$ with a compact support can be represented by its momentum image
\begin{equation}\label{1:cut3}
f(s)=\frac{1}{2\pi}\int_{\mathbb{R}}dt\,e^{ist}\hat{f}(t),\,\,\,\mbox{where}\,\,\,\,
\hat{f}(t)=\int_{\mathbb{R}}ds\,e^{-ist}f(s),
\end{equation}
and the hat denotes the Fourier transformed version.
In addition let us note the presence of the following relation
\begin{equation}\label{1:cut4}
\frac{1}{2\pi}\int_{\mathbb{R}}dr\,e^{i(s-t)r}=\delta(s-t).
\end{equation}

Hence, the homogenization of the function can be reformulated as the homogenization of the exponential. Therefore, we can write the following relation
\begin{equation}\label{1:cut5}
\frac{1}{S_0}\int_{\mathbb{S}_{\phantom{1}}^{0}}d\sigma(r)\,f(s-r/\Lambda)=
\frac{1}{2\pi}\int_{\mathbb{R}}dt\,e^{ist}\hat{f}(t)\rho(t/\Lambda),
\end{equation}
where
\begin{equation}\label{1:cut6}
\rho(t/\Lambda)=\frac{1}{S_0}\int_{\mathbb{S}_{\phantom{1}}^{0}}d\sigma(r)\,e^{itr/\Lambda}=
\frac{1}{2}\sum_{\sigma=\pm 1}e^{it\sigma/\Lambda}=\cos(t/\Lambda).
\end{equation}
We see that the function $\rho(\cdot)$ gives additional ascillations, which perform the regularization. Using the limit $\Lambda\to+\infty$, these oscillations can be excluded. In high dimensions, the function gets an additional decreasing at the infinity.

On the other hand, we can use the Taylor expansion of the function $\cos(\cdot)$ and take it from under the integration in the form of the derivatives. Then, we get
\begin{equation}\label{1:cut7}
\frac{1}{S_0}\int_{\mathbb{S}_{\phantom{1}}^{0}}d\sigma(r)\,f(s-r/\lambda)=
\sum_{n=0}^{+\infty}\frac{(-1)^n}{(2n)!}\Big(A(s)/\Lambda^2\Big)^nf(s)=\rho\Big(\sqrt{A(s)}/\Lambda\Big)f(s),
\end{equation}
where we have used the operator-valued function, which is understood as the corresponding Taylor series.

To obtain a connection between functions (\ref{1:cut1}) and (\ref{1:cut6}) we need to make one more step. First of all, let us remember the Fourier transform of $G(|s|)$
\begin{equation}\label{1:cut8}
\int_{\mathbb{R}}ds\,e^{ist}G(|s|)=\frac{1}{t^2}=\hat{G}(|t|).
\end{equation}

Then, using the representation $G(|s|_\Lambda)=G(|s|)+\big(G(|s|_\Lambda)-G(|s|)\big)$ and the fact that the difference $G(|s|_\Lambda)-G(|s|)=0$ for all $|s|\geqslant 1/\Lambda$, we get the following chain of equalities
\begin{equation}\label{1:cut9}
\int_{\mathbb{R}}ds\,e^{ist}G(|s|_\Lambda)=\frac{1}{t^2}+\int\limits_{-1/\Lambda}^{1/\Lambda}ds\,e^{ist}
\bigg(\frac{|s|}{2}-\frac{1}{2\Lambda}\bigg)=\frac{\rho(t/\Lambda)}{t^2}
=\rho(t/\Lambda)\hat{G}(|t|).
\end{equation}

Finally, applying the inverse Fouries transform and formulae (\ref{1:cut4}) and (\ref{1:cut5}), we have
\begin{equation}\label{1:cut10}
G(|s|_\Lambda)=\frac{1}{2\pi}\int_{\mathbb{R}}dt\,e^{ist}\hat{G}(|t|)\rho(t/\Lambda)=
\frac{1}{S_0}\int_{\mathbb{S}_{\phantom{1}}^{0}}d\sigma(r)\,G(|s-r/\Lambda|).
\end{equation}

After all the calculations, we have obtained two very interesting and remarkable relations. On one hand, we have shown that our cutoff regularization (\ref{1:cut1}) can be achieved by using the operator of homogenization in the form (\ref{1:cut5}). And on the other hand, such type of the regularization has an explicit spectral meaning, because from the spectral decomposition of formula (\ref{1:cut8}) we see that the eigenvalue $1/\lambda^2$ goes to $\rho(\lambda/\Lambda)/\lambda^2$.
Additionally, let us note that the limit $\Lambda\to+\infty$ removes the regularization, because $\rho(0)=1$.

In the next sections, we describe our approach for all dimensions and more intricate Laplace-type operators. Besides that we make some additional calculations for several special operators.

\section{Generalization}
\label{1:sec:gener}

To move on, let us introduce some additional convenient notations. First of all, we abandon the restriction on the dimension, so we study $d\in\mathbb{N}$. Then, considering $x\in\mathbb{R}^d$, we can define an auxiliary function $G(|x|)$ as
\begin{equation}\label{Green}
G(|x|)=
\begin{cases}
\,\,\,\,\,\,\,\,\,\,\,\,\,-|x|/2,&d=1;\\
\,\,\,\,\,\,-\ln(|x|)/2\pi,&d=2;\\
|x|^{2-d}/(d-2)S_{d-1},&d\geqslant 3,
\end{cases}
\end{equation}
where $S_{d-1}=2\pi^{d/2}/\Gamma(d/2)$ is the surface area of $\mathbb{S}^{d-1}$, the $(d-1)$-dimensional unit sphere, centered at the origin. 
Also, we have used $|x|=\sqrt{x_\mu x^\mu}$ with the corresponding summation by the repeating indices.
Of course, the last function $G(\cdot)$ depends on the dimension value, which will be omitted as a rule, because this does not cause confusion. Let us note that function (\ref{Green}) satisfies the very well known equality $-\partial_{x_\mu}\partial_{x^\mu}G(|x-y|)=\delta(x-y)$, where $x,y\in\mathbb{R}^d$.

In the next calculations, we repeat all steps from the previous section. The generalization means that we abandon the restriction on the dimension parameter, though it takes only positive integer values.

Following the main idea, we need to find the Fourier transformation for the regularized, see (\ref{1:int3}), Green's function $G(|x|_\Lambda)$.

\begin{lemma}\label{fur1}
Let all the conditions described above be valid. Also,
let $x,k\in\mathbb{R}^d$, $s=|x|$, $t=|k|$, and $s_{\Lambda}$ is the corresponding deformation according to formula (\ref{1:cut1}), where $\Lambda>0$. 
Then, we have the following relation
\begin{equation}\label{plusandmin4}
\int_{\mathbb{R}^d}d^dx\,e^{ik_\mu x^\mu}G(s_{\Lambda})=
\frac{1}{S_{d-1}t^2}
\int_{\mathbb{S}^{d-1}_{\phantom{1}}}d\sigma(x)\,e^{ik_\mu x^\mu/\Lambda}.
%\,\,\,\,\,\,\,\,\,\mbox{for}\,\,\,d\geqslant 2.
\end{equation}
\end{lemma}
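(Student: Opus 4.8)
The plan is to read the right-hand side of (\ref{plusandmin4}) as the Fourier image of a spherical average of the fundamental solution, which turns the lemma into a coordinate-space identity of potential-theoretic type. For a fixed unit vector $x_0\in\mathbb{S}^{d-1}$ (the dummy sphere variable written $x$ in (\ref{plusandmin4})), the translation property of the Fourier transform together with the relation $\int_{\mathbb{R}^d}d^dx\,e^{ik_\mu x^\mu}G(|x|)=1/t^2$, which is merely $-\partial_{x_\mu}\partial_{x^\mu}G=\delta$ read in momentum space, gives $e^{ik_\mu x_0^\mu/\Lambda}/t^2=\int_{\mathbb{R}^d}d^dx\,e^{ik_\mu x^\mu}G(|x-x_0/\Lambda|)$. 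Averaging this over $x_0\in\mathbb{S}^{d-1}$ identifies the right-hand side of (\ref{plusandmin4}) with the Fourier transform of
\begin{equation}
V_\Lambda(x)=\frac{1}{S_{d-1}}\int_{\mathbb{S}^{d-1}}d\sigma(x_0)\,G(|x-x_0/\Lambda|).
\end{equation}
By uniqueness of the Fourier transform it therefore suffices to prove the coordinate identity $G(s_\Lambda)=V_\Lambda(x)$, which is the $d$-dimensional counterpart of (\ref{1:cut10}).

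I would establish this identity by the mean-value property of $G$. Since $G$ depends on the radius only, $V_\Lambda(x)$ is the average of $z\mapsto G(|z|)$ over the sphere of radius $1/\Lambda$ centred at $x$, i.e. the potential produced by a unit charge spread uniformly over the shell $\{|y|=1/\Lambda\}$. Rotational symmetry makes $V_\Lambda$ radial, and away from the shell it is harmonic, hence on each region a linear combination of $1$ and $|x|^{2-d}$ (of $1$ and $\ln|x|$ when $d=2$). Regularity at the origin removes the singular mode in the interior, forcing $V_\Lambda$ to equal the constant $G(1/\Lambda)$ there; the prescribed behaviour at infinity together with Gauss's law pins down the exterior solution to be exactly $G(|x|)$; and continuity across $\{|x|=1/\Lambda\}$ glues the two. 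This yields $V_\Lambda(x)=G(|x|)$ for $|x|>1/\Lambda$ and $V_\Lambda(x)=G(1/\Lambda)$ for $|x|\le1/\Lambda$, that is $V_\Lambda(x)=G(s_\Lambda)$. A purely computational alternative, mirroring (\ref{1:cut9}), is to split $G(s_\Lambda)=G(s)+h_\Lambda$ with $h_\Lambda=G(s_\Lambda)-G(s)$ supported in the ball $\{|x|\le1/\Lambda\}$, insert $\hat G=1/t^2$, and check that $\int_{\mathbb{R}^d}d^dx\,e^{ik_\mu x^\mu}h_\Lambda$ equals $(\rho(t/\Lambda)-1)/t^2$, where $\rho(t/\Lambda)=S_{d-1}^{-1}\int_{\mathbb{S}^{d-1}}d\sigma(x_0)\,e^{ik_\mu x_0^\mu/\Lambda}$; this is exactly the route of the one-dimensional case.

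The geometry is easy; the real obstacle is the functional-analytic bookkeeping of the Fourier transforms. For $d\ge3$ everything converges in the classical sense: $G(s_\Lambda)$ is bounded and $\hat G=1/t^2$ is a locally integrable function. For $d=1,2$, however, $G$ grows at infinity and $1/t^2$ is no longer locally integrable near the origin, so each equality must be understood in the sense of tempered distributions, with the factor $1/t^2$ taken as the fundamental-solution (finite-part) extension used already in (\ref{1:cut8}). The feature that makes this harmless is that the angular average $\rho(t/\Lambda)$ is smooth in $k$ and satisfies $\rho(t/\Lambda)-1=O(t^2)$ as $t\to0$, because $\int_{\mathbb{S}^{d-1}}x_0\,d\sigma(x_0)=0$ by symmetry. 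Hence the whole right-hand side $\rho(t/\Lambda)/t^2$ carries the same singular part $1/t^2$ as $\hat G$ and a remainder regular at $t=0$, so the distributional identity is internally consistent and the averaging and translation manipulations above are legitimate; once this is verified, the lemma follows.
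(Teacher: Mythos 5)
Your proof is correct, but it runs essentially in the opposite direction from the paper's. The paper stays in momentum space: it writes $G(s_\Lambda)=G(s)+\big(G(s_\Lambda)-G(s)\big)$, notes that the difference is supported in the ball of radius $1/\Lambda$, replaces $e^{ik_\mu x^\mu}$ there by $-\partial_{x_\mu}\partial_{x^\mu}e^{ik_\mu x^\mu}/t^2$ and integrates by parts twice, the boundary terms producing exactly the spherical integral $\int_{\mathbb{S}^{d-1}}d\sigma(x)\,e^{ik_\mu x^\mu/\Lambda}$ --- this is precisely the ``computational alternative'' you sketch at the end. Your primary route instead establishes the coordinate-space identity $G(s_\Lambda)=S_{d-1}^{-1}\int_{\mathbb{S}^{d-1}}d\sigma(y)\,G(|x-y/\Lambda|)$ first, via Newton's shell theorem (equivalently, the mean-value property of harmonic functions applied on either side of the shell, plus Gauss's law and regularity at the origin), and only then passes to momentum space by the translation property of the Fourier transform. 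That coordinate identity is exactly the paper's Lemma~\ref{fur2}, which the author later deduces \emph{from} Lemma~\ref{fur1}; so your argument proves Lemmas~\ref{fur1} and~\ref{fur2} in one stroke, and arguably more transparently, at the cost of invoking potential-theoretic machinery instead of a bare double integration by parts. Your treatment of the $d=1,2$ distributional subtleties ($1/t^2$ not locally integrable at the origin, cured by noting $\rho(t/\Lambda)-1=O(t^2)$ and that the compactly supported difference $G(s_\Lambda)-G(s)$ has an entire Fourier transform) is at least as careful as the paper's, which simply declares all computations to be in the sense of generalized functions.
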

\begin{proof} Let us proceed in stages. The situation $d=1$ was studied in the previous section. In the case $d\geqslant 3$ we are going to add and subtract the function $1/s^{d-2}$ to the integrand in the region $s\Lambda<1$. Then, using the Fourier transform for the standard Green's function, we obtain
\begin{equation}\label{plusandmin2}
\frac{1}{(d-2)S_{d-1}}
\int_{\mathbb{R}^d}d^dx\,\frac{e^{ik_\mu x^\mu}}{s^{d-2}_{\Lambda}}=
\frac{1}{t^2}+\frac{1}{(d-2)S_{d-1}}\int_{\mathbb{B}_{1/\Lambda}^d}d^dx\,e^{ik_\mu x^\mu}\bigg(\Lambda^{d-2}-\frac{1}{s^{d-2}}\bigg),
\end{equation}
where $\mathbb{B}_{1/\Lambda}^{d}$ is the closed $d$-dimensional ball with the radius $1/\Lambda$, centered at the origin.

Further, let us transform the second term in the last equality. For that we represent the exponential in the following form $e^{ik_\mu x^\mu}=-\partial_{x_\mu}\partial_{x^\mu}e^{ik_\mu x^\mu}/t^2$ and then integrate by parts twice
\begin{equation}\label{plusandmin3}
\int_{\mathbb{B}_{1/\Lambda}^{d}}d^dx\,\Big(\partial_{x_\mu}\partial_{x^\mu}e^{ik_\mu x^\mu}\Big)\bigg(\Lambda^{d-2}-\frac{1}{s^{d-2}}\bigg)=
(d-2)S_{d-1}-(d-2)\int_{\mathbb{S}^{d-1}_{\phantom{1}}}d\sigma(x)\,e^{ik_\mu x^\mu/\Lambda},
\end{equation}
where we have used the equality $\partial_{x_\mu}\partial_{x^\mu}s^{2-d}=-(d-2)S_{d-1}\delta(x)$.

Hence, substituting relation (\ref{plusandmin3}) in formula (\ref{plusandmin2}) and making the change of variables $x^\mu\to x^\mu/\Lambda$, we obtain the statement of the lemma for $d>2$.

The equality for $d=2$ can be achieved in the same manner. We note separately that all the calculations are performed in the sense of generalized functions on the space of functions with the compact support.
\end{proof}

We see from the last lemma, that the Fourier transformation for $G(|x|_\Lambda)$ contains an additional factor, compared to the non-regularized function. This factor has an integral form. Fortunately, it can be calculated explicitly.

\begin{lemma}
\label{fur3}
Let $k\in\mathbb{R}^d$, $t=|k|$, and $\Lambda>0$. Then,
under the conditions described above, the relation holds 
\begin{equation}\label{ffa1}
\rho(|t|/\Lambda)=
\frac{1}{S_{d-1}}\int_{\mathbb{S}^{d-1}_{\phantom{1}}}d\sigma(y)\,e^{ik_\mu y^\mu/\Lambda}=
\Gamma(d/2)\bigg(\frac{|t|}{2\Lambda}\bigg)^{1-d/2}J_{d/2-1}(|t|/\Lambda).
\end{equation}
\end{lemma}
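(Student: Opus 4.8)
The plan is to recognize the integral as the classical spherical mean of a plane wave, which is rotationally invariant and hence a function of the single scalar $a:=|t|/\Lambda$. First I would exploit the symmetry of $\mathbb{S}^{d-1}$: since the measure $d\sigma$ is invariant under rotations of $\mathbb{R}^d$, I may rotate the covector $k/\Lambda$ so that it points along the first coordinate axis. With $y_1$ the corresponding component, the integral reduces to $\int_{\mathbb{S}^{d-1}}d\sigma(y)\,e^{iay_1}$, which manifestly depends only on $a$.

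Next I would slice the sphere by the polar angle $\theta\in[0,\pi]$ measured from the first axis, so that $y_1=\cos\theta$ and each latitude is a copy of $\mathbb{S}^{d-2}$ of radius $\sin\theta$. The induced surface measure factorizes as $d\sigma(y)=\sin^{d-2}\theta\,d\theta\,d\sigma_{d-2}$; integrating out the trivial dependence on $\mathbb{S}^{d-2}$ gives, for $d\geqslant 2$,
\begin{equation}
\frac{1}{S_{d-1}}\int_{\mathbb{S}^{d-1}}d\sigma(y)\,e^{iay_1}=\frac{S_{d-2}}{S_{d-1}}\int_0^{\pi}e^{ia\cos\theta}\sin^{d-2}\theta\,d\theta .
\end{equation}
The imaginary part drops out under the reflection $\theta\mapsto\pi-\theta$, leaving a cosine integral.

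The decisive step is to identify this integral with a Bessel function. I would invoke the Poisson representation
\begin{equation}
J_\nu(a)=\frac{(a/2)^\nu}{\sqrt{\pi}\,\Gamma(\nu+1/2)}\int_0^{\pi}e^{ia\cos\theta}\sin^{2\nu}\theta\,d\theta,\qquad \nu=\frac{d}{2}-1,
\end{equation}
so that the $\theta$-integral equals $\sqrt{\pi}\,\Gamma((d-1)/2)(a/2)^{1-d/2}J_{d/2-1}(a)$. If one prefers a self-contained derivation avoiding the cited identity, the same result follows by expanding $e^{ia\cos\theta}$ in its power series, evaluating the surviving moments $\int_0^{\pi}\cos^{2m}\theta\sin^{d-2}\theta\,d\theta$ as Beta integrals equal to $\Gamma(m+1/2)\Gamma((d-1)/2)/\Gamma(m+d/2)$, and matching the resulting series term by term with the series of $J_{d/2-1}$.

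Finally I would clean up the constant: substituting $S_{d-1}=2\pi^{d/2}/\Gamma(d/2)$ gives $S_{d-2}/S_{d-1}=\Gamma(d/2)/(\sqrt{\pi}\,\Gamma((d-1)/2))$, which cancels the factor $\sqrt{\pi}\,\Gamma((d-1)/2)$ from the Bessel integral and leaves precisely $\Gamma(d/2)(a/2)^{1-d/2}J_{d/2-1}(a)$ with $a=|t|/\Lambda$, as claimed. The low-dimensional cases I would verify against the explicit special functions: $d=1$ reproduces $\rho=\cos(|t|/\Lambda)$ through $J_{-1/2}(z)=\sqrt{2/\pi z}\,\cos z$ (recovering the computation of the previous section), and $d=2$ gives $\rho=J_0(|t|/\Lambda)$. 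I expect the only genuine obstacle to be the Bessel identification in the third step; everything else is rotational bookkeeping and Gamma-function arithmetic.
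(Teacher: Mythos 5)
Your proof is correct, but it follows a genuinely different route from the paper's. You reduce the spherical average by rotational invariance to a one-dimensional polar integral $\frac{S_{d-2}}{S_{d-1}}\int_0^{\pi}e^{ia\cos\theta}\sin^{d-2}\theta\,d\theta$ and then invoke (or re-derive via Beta-function moments) the Poisson integral representation of $J_{\nu}$ with $\nu=d/2-1$; your Gamma-function bookkeeping checks out, and the cancellation against $S_{d-2}/S_{d-1}=\Gamma(d/2)/(\sqrt{\pi}\,\Gamma((d-1)/2))$ is exactly right. The paper instead never introduces a polar angle: it Taylor-expands the exponential, computes the spherical average of each monomial $\hat{y}^{\mu_1}\cdots\hat{y}^{\mu_{2n}}A_{\mu_1\ldots\mu_{2n}}$ by inserting a Gaussian radial factor so the integral becomes a full Gaussian moment over $\mathbb{R}^d$, evaluates that by differentiating $e^{|z|^2/2}$ at $z=0$ (yielding the pairing count $(2n)!/(2^n n!)$ times the full trace of $A$), and then resums the series using $\sum_{n\geqslant0}\frac{(-r^2/4)^n}{n!\,\Gamma(n+d/2)}=(r/2)^{1-d/2}J_{d/2-1}(r)$. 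Your route is the shorter, classical one (it is essentially the standard computation of the Fourier transform of the uniform measure on the sphere), at the cost of requiring $d\geqslant2$ for the latitude slicing and hence the separate $d=1$ check you already supply; the paper's route is dimension-uniform and produces, as a byproduct, a reusable formula for spherical averages of arbitrary symmetric tensors, which is in the spirit of the tensor manipulations used later in the paper. One small point of rigor if you keep the Poisson representation: it is valid only for $\nu>-1/2$, which is precisely $d>1$, so your fallback series derivation (or the explicit $d=1$ verification) is not optional but needed to cover that endpoint.
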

\begin{proof}
To obtain the statement of the lemma, we need to derive one auxiliary relation. Let
the hat denotes, that the corresponding vector is unit, so $\hat{y}^\mu=y^\mu/|y|$ for all $y\in\mathbb{R}^d$. Let $m=2n$ with $n\in\mathbb{N}\cup\{0\}$, $c=2^{n+d/2}\Gamma(n+d/2)$, and $A_{\mu_1\ldots\mu_m}$ is a symmetric tensor. Then, analogously to \cite{DIS-2020}, we have the following chain of equalities
\begin{align}\label{ffa0}
\int_{\mathbb{S}^{d-1}_{\phantom{1}}}d\sigma(y)\,\hat{y}^{\mu_1\ldots\mu_m}A_{\mu_1\ldots\mu_m}&=
\frac{2}{c}\int_{\mathbb{R}_+}dr\,e^{-r^2/2}r^{m+d-1}
\int_{\mathbb{S}^{d-1}_{\phantom{1}}}d\sigma(y)\,\hat{y}^{\mu_1\ldots\mu_m}A_{\mu_1\ldots\mu_m}\\\label{ffa2}&=
\frac{2}{c}\int_{\mathbb{R}^d}d^dy\,e^{-|y|^2/2}
y^{\mu_1\ldots\mu_m}A_{\mu_1\ldots\mu_m}\\\label{ffa3}&=
\frac{2(2\pi)^{d/2}}{c}
\partial_{z_{\mu_1}}\cdot\ldots\cdot\partial_{z_{\mu_m}}
A_{\mu_1\ldots\mu_m}e^{|z|^2/2}\Big|_{z=0}\\\label{ffa4}&=
\frac{2(2\pi)^{d/2}}{c}\frac{(2n)!}{2^nn!}
A_{\mu_1\mu_1\ldots\mu_n\mu_n}.
\end{align}
Hence, using the Taylor expansion for the exponential on the left hand side of (\ref{ffa1}), applying the last relation to the each term, and then summing the resulting series by using
\begin{equation}\label{ffa5}
\sum_{n=0}^{+\infty}\frac{(-r^2/4)^n}{n!\Gamma(n+d/2)}=(r/2)^{1-d/2}J_{d/2-1}(r),
\end{equation}
where $r\geqslant0$, we obtain the statement of the lemma.
\end{proof}

From the last lemma it follows that the factor $\rho(\cdot)$ is presented by the oscillating function, which depends on the dimension parameter. 
We have omitted this parameter in definition (\ref{ffa1}), because this does not lead to confusion.
For example, in the case $d=1$, we get the known result from formula (\ref{1:cut6}). Indeed, we get
\begin{equation}\label{1:cala:14}
\Gamma(1/2)\bigg(\frac{|t|}{2\Lambda}\bigg)^{1/2}J_{-1/2}(|t|/\Lambda)=
\cos(|t|/\Lambda).
\end{equation}
So, the obtained result  is consistent with the previous one.

Finally, we can derive a general representation formula using an integration by $(d-1)$-dimensional sphere. Such formula gives a connection of the regularization with the homogenization, see Section \ref{1:sec:aprd}.

\begin{lemma}\label{fur2}
Let $x\in\mathbb{R}^d$, $s=|x|$, $\Lambda>0$, and $s_\Lambda$ is the corresponding deformation from (\ref{1:cut1}).
Then, under the conditions described above, we have the following additional representation for (\ref{Green})
\begin{equation}\label{fff0}
G(s_{\Lambda})=\frac{1}{S_{d-1}}
\int_{\mathbb{S}^{d-1}_{\phantom{1}}}d\sigma(y)\,G(|x-y/\Lambda|).
\end{equation}
\end{lemma}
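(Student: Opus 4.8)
The plan is to prove identity (\ref{fff0}) by showing that the Fourier transforms of its two sides coincide, and then to invoke the injectivity of the Fourier transform on the space of generalized functions acting on compactly supported test functions — precisely the setting in which the preceding two lemmas were established. The point is that Lemmas \ref{fur1} and \ref{fur3} already package almost everything that is needed, so the argument collapses to a short computation together with one translation-invariance step.

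First I would treat the left-hand side. By Lemma \ref{fur1},
\begin{equation}
\int_{\mathbb{R}^d}d^dx\,e^{ik_\mu x^\mu}G(s_\Lambda)=\frac{1}{S_{d-1}t^2}\int_{\mathbb{S}^{d-1}_{\phantom{1}}}d\sigma(x)\,e^{ik_\mu x^\mu/\Lambda},
\end{equation}
and by Lemma \ref{fur3} the spherical integral on the right equals $S_{d-1}\rho(|t|/\Lambda)$, so the transform of the left-hand side is $\rho(|t|/\Lambda)/t^2$. Next I would compute the transform of the right-hand side. Interchanging the finite spherical integration over $\mathbb{S}^{d-1}$ with the Fourier integral over $\mathbb{R}^d$ (legitimate when tested against a compactly supported function), I would use the translation rule $\int_{\mathbb{R}^d}d^dx\,e^{ik_\mu x^\mu}G(|x-a|)=e^{ik_\mu a^\mu}\hat{G}(|t|)$ with $a=y/\Lambda$, where $\hat{G}(|t|)=1/t^2$ follows from the defining relation $-\partial_{x_\mu}\partial_{x^\mu}G=\delta$. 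This gives
\begin{align}
\int_{\mathbb{R}^d}d^dx\,e^{ik_\mu x^\mu}\,\frac{1}{S_{d-1}}\int_{\mathbb{S}^{d-1}_{\phantom{1}}}d\sigma(y)\,G(|x-y/\Lambda|)
&=\frac{1}{S_{d-1}}\int_{\mathbb{S}^{d-1}_{\phantom{1}}}d\sigma(y)\,e^{ik_\mu y^\mu/\Lambda}\,\hat{G}(|t|)\nonumber\\
&=\frac{1}{t^2}\cdot\frac{1}{S_{d-1}}\int_{\mathbb{S}^{d-1}_{\phantom{1}}}d\sigma(y)\,e^{ik_\mu y^\mu/\Lambda}=\frac{\rho(|t|/\Lambda)}{t^2},
\end{align}
once more by Lemma \ref{fur3}. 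Since both sides of (\ref{fff0}) thus have the same Fourier transform, they coincide, and the lemma follows.

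The main obstacle is not the algebra but the distributional bookkeeping. Neither $G$ — which grows at infinity for $d=1,2$ and is singular at the origin for every $d$ — nor its transform $1/t^2$ is classically integrable, so each manipulation (the Fourier transform itself, the interchange of the spherical and the $\mathbb{R}^d$ integrations, the translation identity, and the final inversion) must be read in the sense of generalized functions tested against $C_0^\infty$. In particular the logarithmic case $d=2$, where $\hat{G}=1/t^2$ has a more delicate low-momentum behaviour, should be verified to fit the same scheme, exactly as the corresponding step was flagged at the end of the proof of Lemma \ref{fur1}.

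As a consistency check I would note the geometric meaning of the claim: for $|x|>1/\Lambda$ the sphere $x-y/\Lambda$ does not meet the origin, so the right-hand side of (\ref{fff0}) is the spherical mean of the harmonic function $G$ and reduces to $G(|x|)=G(s_\Lambda)$ by the mean-value property, in agreement with the identity on that region. This real-space picture also explains why the nontrivial content sits entirely in the ball $|x|\leqslant 1/\Lambda$, where the singularity of $G$ is encountered and the constant value $G(1/\Lambda)$ is produced.
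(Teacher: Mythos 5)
Your argument is essentially the paper's: both proofs reduce the identity to the Fourier-transform computations already packaged in Lemmas \ref{fur1} and \ref{fur3}, the only cosmetic difference being the direction — you transform both sides of (\ref{fff0}) and invoke injectivity, while the paper applies the inverse transform to (\ref{plusandmin4}) and recognizes the right-hand side as the spherical mean by acting with the Laplacian. The translation rule you invoke is the same ingredient the paper uses implicitly when it rewrites $-S_{d-1}^{-1}\int d\sigma(y)\,\delta(x-y/\Lambda)$ as $\partial_{x_\mu}\partial_{x^\mu}$ applied to the spherical mean of $G$.

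One loose end, which the paper handles explicitly and you only gesture at: for $d=1,2$ the relation $-\partial_{x_\mu}\partial_{x^\mu}G=\delta$ determines $\hat{G}$ only up to distributions supported at $k=0$ (the regularization of $1/t^2$ at the origin of momentum space is not unique there), so ``both sides have the same Fourier transform'' in your sense only yields equality modulo harmonic polynomials — exactly the zero modes $as+b$ (resp.\ constants) discussed in Section \ref{1:sec:prst}. The paper closes this by fixing the value at the origin, $G(|0|_\Lambda)=G(1/\Lambda)$. Your closing mean-value observation — that for $|x|>1/\Lambda$ the right-hand side of (\ref{fff0}) is the spherical mean of a function harmonic on the relevant ball and hence equals $G(|x|)=G(s_\Lambda)$ there — already kills this residual ambiguity, but you present it as a consistency check rather than as part of the proof. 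Promote it (or add the origin-matching condition) and the argument is complete for all $d$.
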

\begin{proof}
To verify the equality we need to apply the inverse Fourier transform to the both sides of (\ref{plusandmin4}). Hence, the left hand side gives (\ref{Green}), while for the right hand side we can write out the following chain of equalities
\begin{align}
\label{fff1}
\frac{\partial_{x_\mu}\partial_{x^\mu}}{(4\pi)^{d}}\int_{\mathbb{R}^d}d^dk\,
\frac{e^{ik_\mu x^\mu}}{S_{d-1}|k|^2}
\int_{\mathbb{S}^{d-1}_{\phantom{1}}}d\sigma(y)\,e^{ik_\mu y^\mu/\Lambda}&=-
\frac{1}{S_{d-1}}
\int_{\mathbb{S}^{d-1}_{\phantom{1}}}d\sigma(y)\,\delta(x-y/\Lambda)\\&=
\frac{\partial_{x_\mu}\partial_{x^\mu}}{S_{d-1}}
\int_{\mathbb{S}^{d-1}_{\phantom{1}}}d\sigma(y)\,G(|x-y/\Lambda|),
\end{align}
from which the statement of the lemma follows. Note that on the last step we have excluded the operator $\partial_{x_\mu}\partial_{x^\mu}$ and then we have fixed local "zero modes" with the use of the obvious value at the origin $G(|0|_\Lambda)=G(1/\Lambda)$.
\end{proof}

The last three lemmas lead to the main result discussed above. On the one hand, we have the connection between the explicit cutoff regularization of the Green's function and the homogenization with the use of the $(d-1)$-dimensional sphere, see formulae (\ref{1:int3}) and (\ref{fff0}). On the other hand, we have obtained the relation between the regularized Green's function and the corresponding momentum (spectral) representation. All the equations can be written in the form for $x\in\mathbb{R}^d$
\begin{equation}\label{1:cala:15}
G(|x|_\Lambda)=\frac{1}{(2\pi)^d}\int_{\mathbb{R}^d}d^dy\,e^{ix_\mu y^\mu}\hat{G}(|y|)\rho(|y|/\Lambda)=
\frac{1}{S_{d-1}}\int_{\mathbb{S}_{\phantom{1}}^{d-1}}d\sigma(y)\,G(|x-y/\Lambda|),
\end{equation}
which is actually the generalization of the one-dimensional case, see (\ref{1:cut10}), to an arbitrary dimension. Additionally, we note that the first equality from (\ref{1:cala:15}) is known in the general theory of Fourier transforms, see \cite{stein}, but in the context of our paper they are new and give new point of view on the process of the regularization.

Indeed, the last chain of equalities plays an important role in infrared decompositions of Green's functions, because they allow us to combine several types of the representations of the standard fundamental solution. As we will see this in the following sections, such trick simplifies a number of calculations. We are going to demonstrate this on two types of Laplace operators, which appear in the pure four-dimensional Yang--Mills theory and the standard two-dimensional Sigma-model.

\section{Further calculations}
\label{1:sec:fur}

Now we want to study an application of the regularization to some special operators. First of all, we need to remember some basic concepts of the infrared, near the diagonal, expansion of a Green's function using the corresponding Seeley--DeWitt (sometimes, they are named after Hadamard, Minakshisundaram \cite{111}, and Gilkey \cite{32}) coefficients, see \cite{110,1000}.

Let $L$ be a Laplace-type operator on $\mathbb{R}^d$ with smooth coefficients. Moreover, we concretize its form using the following local formula
\begin{equation}\label{1:cal:1}
L(x)=-D_{x_\mu}D_{x^\mu}-v(x)\,\,\,\mbox{for all}\,\,\,x\in\mathbb{R}^d,
\end{equation}
where $D_{x^\mu}=\partial_{x^\mu}+B_\mu(x)$ is a covariant derivative, $B_\mu(x)$ are components of an 1-form connection, and $v(x)$ is an auxiliary potential. We note that the last objects can be non-commutative between each other.

In this case the Green's function $G(x,y)$, which solves the equation $L(x)G(x,y)=\delta(x-y)$ and satisfy appropriate boundary condition (decreasing at infinity, for example), can be expanded near the diagonal, where $|x-y|$ is bounded by some small fixed parameter, in the following form
\begin{equation}\label{1:cal:2}
G(x,y)=\sum_{n=0}^{+\infty}R_n(x-y)a_n(x,y)+
PS(x,y).
\end{equation}
Here, $a_n(x,y)$ are the Seeley--DeWitt coefficients, which are constructed using the heat kernel method and can be found in the works \cite{Shore,30,31,32,33}. They satisfy the following recurrence relations
\begin{equation}\label{1:cal:3}
(x-y)^\mu D_{x^\mu}a_0(x,y)=0,\,\,\,
a_0(y,y)=1,\,\,\,
\big(n+(x-y)^\mu D_{x^\mu}\big)a_n(x,y)=-L(x)a_{n-1}(x,y),\,\,\,n>0,
\end{equation}
and can be defined by them.
Then, the functions $R_n(x-y)$ do not depend on the connection components and the potential. Their smoothness gets better with increasing of their order number. The first one has the worst smoothness, because it leads to the $\delta$-functional according to the formula $-\partial_{x_\mu}\partial_{x^\mu}R_0(x)=\delta(x)$. Further, the part $PS(x,y)$ is smooth and it includes the global information about the operator and the boundary condition, while the Seeley--DeWitt coefficients are local by construction.

Hence, the non-smooth behavior of the Green's function is included into the $R_n$-functions. We do not write an appropriate  recurrence relations for them, because their definitions can be varied. Indeed, we can subtract a smooth part from the sum and add it to $PS(x,y)$. Anyway, in every example we give explicit formulae.

In the next calculations, we want to demonstrate how to apply the process of regularization in some popular situations, such as the pure four-dimensional Yang--Mills theory and the two-dimensional Sigma-model. 

Additionally, we note that the next operators will have an auxiliary small parameter $s\in\mathbb{R}_+$, such that the connection components and the potential would be small, and, moreover, they can be excluded by the transition $s\to+0$.

\subsection{Pure 4-D Yang--Mills theory} Let $G$ be a compact semisimple Lie group, and $\mathfrak{g}$ is its Lie algebra. Then, let $t^a$ be the generators of the algebra $\mathfrak{g}$, where $a=1,\ldots,\dim\mathfrak{g}$,
such that the relations hold
\begin{equation}\label{constdef}
[t^a,t^b]=f^{abc}t^c,\,\,\,\,\,\,\mathrm{tr}(t^at^b)=-2\delta^{ab},
\end{equation}
where $f^{abc}$ are antisymmetric structure constants for $\mathfrak{g}$, and '$\mathrm{tr}$' is the Killing form. In this case the connection component $B_\mu(x)$ mentioned above can be expanded as $B_\mu^a(x)t^a$, where, as usual, we mean the standard summation by the repeating indices. Then, after introducing the components of the field strength tensor for $x\in\mathbb{R}^4$ in the form
\begin{equation*}
F_{\mu\nu}^{\phantom{a}}(x)=F_{\mu\nu}^a(x)t^a,\,\,\,\mbox{where}\,\,\,
F_{\mu\nu}^a(x)=\partial_{x^\mu}^{}B_\nu^a(x)-\partial_{x^\nu}^{}B_\mu^a(x)+f^{abc}B_\mu^b(x)B_\nu^c(x),
\end{equation*}
we write out two Laplace-type operators
\begin{equation}\label{oper}
M_0^{ab}(x)=-D_{x_{\mu}}^{ae}D_{x^{\mu}}^{eb},\,\,\,
M_{1\mu\nu}^{\,\,\,ab}(x)=M_0^{ab}(x)\delta_{\mu\nu}^{}-2f^{acb}F_{\mu\nu}^c(x),
\end{equation}
where the matrix components $D^{ab}_{x^{\mu}}$ of the covariant derivative $D^{\phantom{a}}_{x^{\mu}}$ have the form
\begin{equation}\label{1:cal:4}
D^{ab}_{x^{\mu}}=\partial_{x^{\mu}}^{\phantom{a}}\delta^{ab}+f^{adb}B^d_\mu(x).
\end{equation}
These operators play a crucial role in the perturbative expansions and in the Feynman diagram technique, see \cite{Ivanov-Kharuk-2020,Ivanov-Kharuk-2022,4,3,21,22,23}, because they define the main quadratic forms, for gauge and ghost fields, and propagators, or Green's functions, which are actually constructed blocks of the technique. 

To obtain the results for both operators simultaneously, we introduce the operator of more general type
\begin{equation}\label{1:cal:9}
M_{2\mu\nu}^{\,\,\,ab}(x;\alpha)=M_0^{ab}(x)\delta_{\mu\nu}^{}-2\alpha f^{acb}F_{\mu\nu}^c(x),
\end{equation}
where $\alpha\in\mathbb{R}$. Then, we have two additional relations
\begin{equation}\label{1:cal:10}
M_0^{ab}(x)=\frac{1}{4}M_{2\mu\mu}^{\,\,\,ab}(x;0),\,\,\,
M_{1\mu\nu}^{\,\,\,ab}(x)=M_{2\mu\nu}^{\,\,\,ab}(x;1).
\end{equation}
Green's functions, corresponding to the operators from (\ref{oper}) and (\ref{1:cal:9}), we notate as
\begin{equation}\label{1:cal:11}
G_0(x,y),\,\,\,G_{1\mu\nu}(x,y),\,\,\,\mbox{and}\,\,\,
G_{2\mu\nu}(x,y;\alpha),\,\,\mbox{respectively}.
\end{equation}

One of the ways to investigate the quantum Yang--Mills theory relates to the renormalization theory, namely, to studying its divergences. Fortunately, see papers \cite{Ivanov-Kharuk-2020,Ivanov-Kharuk-2022,13}, form of the divergent part has a special view, which allows us to consider a simplified version of the connection components. They are equal to
\begin{equation}\label{1:cal:5}
B^a_{\mu}(x)=\frac{s}{2}x^\nu \xi^a_{\nu\mu},\,\,\,\mbox{where}\,\,\,
\big(\xi^a\big)_{\mu\nu}=\frac{1}{\sqrt{8\dim\mathfrak{g}}}
\begin{pmatrix}
0&1&0&1\\
-1&0&1&0\\
0&-1&0&1\\
-1&0&-1&0
\end{pmatrix}\,\,\,
\mbox{for all}\,\,\,a\in\{1,\ldots,\dim\mathfrak{g}\},
\end{equation}
and $s$ is a small auxiliary positive parameter mentioned above.
The corresponding components $F_{\mu\nu}^a$ are equal to $s\xi^a_{\mu\nu}$.
It is quite convenient that we have commutativity and the following two normalization properties
\begin{equation}\label{1:cal:6}
\sum_{a=1}^{\dim\mathfrak{g}}
\sum_{\mu,\nu=1}^4
\xi_{\mu\nu}^a\xi_{\mu\nu}^a=1,\,\,\,\mbox{and}\,\,\,\,
\sum_{\sigma=1}^{4}
\xi_{\mu\sigma}^a\xi_{\sigma\nu}^b=-\frac{2\delta_{\mu\nu}}{8\dim\mathfrak{g}}\,\,\,\mbox{for all}\,\,\,a,b,\mu,\nu.
\end{equation}
Actually, this parameter can be used in the asymptotic expansion instead of $x-y$, if the last difference is from a finite neighborhood of the diagonal $x=y$. In both situations every term of the asymptotics consists of a finite number of subterms, and the transition from one asymptotics to another can be achieved by an explicit resummation.

Using the last simplifications, we can write out the first three Seeley--DeWitt coefficients for the Green's function $G_{2\mu\nu}(x,y;\alpha)$ according to decomposition (\ref{1:cal:2}) as
\begin{equation}\label{1:cal:12}
a_{0\mu\nu}(x,y)=\Phi(x,y)\delta_{\mu\nu},\,\,\,
a_{1\mu\nu}(x,y)=\Phi(x,y)\bigg(2\alpha s\xi_{\mu\nu}+
\frac{s^2\delta_{\mu\nu}}{12}(x-y)^{\sigma\rho}\xi_{\sigma\beta}\xi_{\rho\beta}\bigg),
\end{equation}
\begin{equation}\label{1:cal:14}
a_{2\mu\nu}(x,y)=s^2\Phi(x,y)\bigg(
\frac{\delta_{\mu\nu}}{12}\xi_{\rho\beta}\xi_{\rho\beta}+
2\alpha^2\xi_{\mu\sigma}\xi_{\sigma\nu}\bigg)+o\big(s^2\big),
\end{equation}
where, in the last equality, we have used that $s\to+0$, $\xi_{\mu\nu}$ denotes the $(\dim\mathfrak{g}\times \dim\mathfrak{g})$ matrix with the elements $f^{abc}\xi_{\mu\nu}^b$, and
\begin{equation}\label{1:cal:15}
\Phi(x,y)=\exp\big(x_\mu\xi_{\mu\nu}y_\nu/2\big).
\end{equation}

Then, for convenience, we make a transformation to the Fock--Schwinger gauge, see \cite{13,33}. So, we define the following auxiliary operator
\begin{equation}\label{1:cal:8}
\hat{M}_{2\mu\nu}(x-y;\alpha)=\Phi(y,x)M_{2\mu\nu}(x;\alpha)\Phi(x,y).
\end{equation}
The explicit calculation gives
\begin{equation}\label{1:cal:13}
\hat{M}_{2\mu\nu}(x-y;\alpha)=-\delta_{\mu\nu}\bigg(
\partial_{x_\mu}\partial_{x^\mu}+s(x-y)^\sigma\xi_{\sigma\rho}
\partial_{x_\rho}+\frac{s^2}{4}(x-y)^{\sigma\rho}\xi_{\sigma\beta}\xi_{\rho\beta}
\bigg)-2s\alpha\xi_{\mu\nu}.
\end{equation}

In the case, when $s\to+0$ and $z=x-y$ is bounded, the Green's function for the gauge transformed operator can be written as, see \cite{1000,33,15,psi},
\begin{equation}\label{1:cal:7}
\hat{G}_{2\mu\nu}(z;\alpha)=R_0(z)\delta_{\mu\nu}+
2\alpha s\xi_{\mu\nu}R_1(z)-
\frac{\alpha^2s^2\delta_{\mu\nu}\xi_{\sigma\beta}\xi_{\sigma\beta}}{2}R_2(z)
-\frac{s^2\delta_{\mu\nu}\xi_{\sigma\beta}\xi_{\sigma\beta}|z|^2(1-2\alpha^2)}{2^9\pi^2}+o\big(s^2\big),
\end{equation}
where we have used the relation $\xi_{\mu\sigma}\xi_{\sigma\nu}=-\delta_{\mu\nu}\xi_{\rho\beta}\xi_{\rho\beta}/4$, and
\begin{equation}\label{SD10}
R_0(x)=\frac{1}{4\pi^2|x|^2},\,\,\,
R_1(x)=-\frac{\ln(|x|^2\mu^2)}{16\pi^2},\,\,\,
R_2(x)=\frac{|x|^2\big(\ln(|x|^2\mu^2)-1\big)}{64\pi^2},
\end{equation}
which satisfy the following equations
\begin{equation}\label{smcut6}
-\partial_{x_\mu}\partial_{x^\mu}R_1(x)=R_0(x),\,\,\,
-\partial_{x_\mu}\partial_{x^\mu}R_2(x)=2R_1(x)-\frac{1}{16\pi^2}.
\end{equation}

Now we are ready to formulate the main task of the subsection. We want to find an asymptotic expansion, when $s\to+0$ and $x$ is from a finite neighborhood of $x=0$, of the operator
\begin{equation}\label{1:cal:23}
\rho\bigg(\sqrt{\hat{M}_2(x;\alpha)}/\Lambda\bigg)\hat{G}_{2}(x;\alpha),\,\,\,
\mbox{where}\,\,\,
\rho(r)=\frac{2J_1(r)}{r},
\end{equation}
up to the $s^2$. 
We have written $x$ instead of $z=x-y$, see formulae (\ref{1:cal:13}) and (\ref{1:cal:7}), because the operators depend on the difference. Hence, for simplicity, we can study the case $y=0$.
In the last formula, we understand  all the operators as the matrix-valued one of size $(4\dim\mathfrak{g})\times(4\dim\mathfrak{g})$. Our idea means that we need to find the Taylor expansion for the $\rho$-operator up to $s^2$, and then apply this to the Green's function expansion (\ref{1:cal:7}).

Let us write out some useful lemmas and their proofs.

\begin{lemma}\label{1:cal:lem1}
Let $\Lambda>0$ be quite large, $L=\ln\big(\Lambda/\mu\big)$, $A(x)=-\partial_{x_\mu}\partial_{x^\mu}$, and $\rho(\cdot)$ is the function defined in (\ref{1:cal:23}). Then, under the conditions described above, we have the following equalities
\begin{equation}\label{1:cal:20}
\rho\Big(\sqrt{A(x)}/\Lambda\Big)\big(|x|^2+a\big)=|x|^2+\Lambda^{-2}+a,
\end{equation}
and
\begin{equation}\label{1:cal:19}
\rho\Big(\sqrt{A(x)}/\Lambda\Big)R_i(x)=\tilde{R}_i(x),
\end{equation}
where $a\in\mathbb{R}$, $i=0,1,2$, and $\tilde{R}_i(x)$ is defined as
\begin{equation}\label{1:cal:16}
\tilde{R}_0(x)=\frac{1}{4\pi^2}
\begin{cases}
|x|^{-2},&|x|>1/\Lambda;\\
\,\,\,\Lambda^2,&|x|\leqslant1/\Lambda,
\end{cases}
\end{equation}
\begin{equation}\label{1:cal:17}
\tilde{R}_1(x)=\frac{1}{4\pi^2}
\begin{cases}
-\frac{1}{4}\ln(|x|^2\mu^2)-\frac{1}{8}|x|^{-2}\Lambda^{-2},&|x|>1/\Lambda;\\
\,\,\,\,\,\,\,\,\,\,\,\,\,\,\,\,\,
\frac{1}{2}L-\frac{1}{8}|x|^2\Lambda^2,&|x|\leqslant1/\Lambda,
\end{cases}
\end{equation}
\begin{equation}\label{1:cal:18}
\tilde{R}_2(x)=\frac{1}{4\pi^2}
\begin{cases}
\frac{1}{16}|x|^2\big(\ln(|x|^2\mu^2)-1\big)
+\frac{1}{16}\Lambda^{-2}\ln(|x|^2\mu^2)+\frac{1}{96}|x|^{-2}\Lambda^{-4}
+\frac{1}{32}\Lambda^{-2}
,&|x|>1/\Lambda;\\
\,\,\,\,\,\,\,\,\,\,\,\,\,\,\,\,\,\,\,\,\,\,\,
-\frac{1}{8}\Lambda^{-2}L-\frac{1}{8}|x|^2L+\frac{1}{96}|x|^4\Lambda^2+\frac{1}{32}|x|^2-\frac{1}{16}\Lambda^{-2}
,&|x|\leqslant1/\Lambda.
\end{cases}
\end{equation}
\end{lemma}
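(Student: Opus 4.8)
The plan is to exploit the fact that, in dimension $d=4$, the factor $\rho(r)=2J_1(r)/r$ of (\ref{1:cal:23}) is precisely the spectral function of Lemma~\ref{fur3} (set $d=4$, so $\Gamma(2)=1$ and $J_{d/2-1}=J_1$). Consequently the operator $\rho(\sqrt{A(x)}/\Lambda)$ carries the two equivalent descriptions established earlier. On one hand, by Lemma~\ref{fur2} and the chain~(\ref{1:cala:15}), it is the homogenization over the $3$-sphere of radius $1/\Lambda$, i.e. the spherical mean
\[
\rho\big(\sqrt{A}/\Lambda\big)f(x)=\frac{1}{S_3}\int_{\mathbb{S}^3_{\phantom{1}}}d\sigma(y)\,f(x-y/\Lambda).
\]
On the other hand, by the series (\ref{ffa5}) it is the differential operator $\rho(\sqrt{A}/\Lambda)=\sum_{m\ge0}\frac{(-1)^m}{m!(m+1)!}(2\Lambda)^{-2m}A^m$ (Pizzetti's expansion), which is legitimate termwise exactly when $f$ is analytic on the closed ball $\mathbb{B}^4_{1/\Lambda}$ centred at $x$. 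I would use the cheap series form wherever that ball avoids the origin, and the integral (or the induced PDE) where it does not.

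With this in hand the first two claims are immediate. For (\ref{1:cal:20}), the argument $|x|^2+a$ is entire and $A(|x|^2)=-8$, $A^2(|x|^2)=0$ in $d=4$, so only the terms $m=0,1$ survive and $\rho(\sqrt{A}/\Lambda)(|x|^2+a)=|x|^2+a-\tfrac12(2\Lambda)^{-2}(-8)=|x|^2+\Lambda^{-2}+a$. For $\tilde{R}_0$ there is nothing to compute: in $d=4$ one has $R_0=G$ by (\ref{Green}), hence $\tilde{R}_0=\rho(\sqrt{A}/\Lambda)R_0=G(|x|_\Lambda)$ is the content of Lemma~\ref{fur2}, giving (\ref{1:cal:16}) directly.

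For the outer region $|x|>1/\Lambda$ of $\tilde{R}_1,\tilde{R}_2$, the ball $\mathbb{B}^4_{1/\Lambda}$ misses the origin where $R_1,R_2$ are singular, so Pizzetti's series applies and, crucially, truncates: reading (\ref{smcut6}) as $AR_1=R_0$, $AR_0=\delta$ and $AR_2=2R_1-\tfrac{1}{16\pi^2}$, $A^2R_2=2R_0$, $A^3R_2=2\delta$, every iterate beyond the $\delta$-stage vanishes away from the origin. This leaves the finite sums
\[
\tilde{R}_1=R_1-\frac{R_0}{8\Lambda^2},\qquad
\tilde{R}_2=R_2-\frac{R_1}{4\Lambda^2}+\frac{R_0}{96\Lambda^4}+\frac{1}{128\pi^2\Lambda^2},
\]
and substituting (\ref{SD10}) reproduces the upper branches of (\ref{1:cal:17}) and (\ref{1:cal:18}).

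The inner region $|x|\le1/\Lambda$ is the part I expect to be the real work, since there the singularity sits inside the averaging sphere and the naive series breaks down (the $\delta$-iterates no longer vanish). Here I would instead use that $\rho(\sqrt{A}/\Lambda)$ commutes with $A$ and preserves constants ($\rho(0)=1$): applying it to (\ref{smcut6}) yields $A\tilde{R}_1=\tilde{R}_0$ and $A\tilde{R}_2=2\tilde{R}_1-\tfrac{1}{16\pi^2}$. On $|x|\le1/\Lambda$ the right-hand sides are now explicit (with $\tilde{R}_0=\Lambda^2/4\pi^2$ constant), so these are radial Poisson equations; writing $A=-(\partial_r^2+\tfrac{3}{r}\partial_r)$, discarding the $r^{-2}$ homogeneous mode by regularity at the origin, and integrating gives each solution up to one additive constant, which I pin down by continuity at $r=1/\Lambda$ with the outer branch already found (the $C^1$ match then holds automatically, the sources $\tilde{R}_0$ and $2\tilde{R}_1-\tfrac1{16\pi^2}$ being continuous). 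This produces the lower branches of (\ref{1:cal:17}) and (\ref{1:cal:18}). The only genuine subtleties to state carefully are the distributional justification that Pizzetti's truncation is valid precisely for $|x|>1/\Lambda$ (it fails exactly when the singularity enters the ball) and the bookkeeping of the integration constants and the $L=\ln(\Lambda/\mu)$ terms across $r=1/\Lambda$.
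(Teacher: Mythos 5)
Your proof is correct, and its skeleton coincides with the paper's: both rest on $R_0=G$ in $d=4$ so that $\tilde R_0=G(|x|_\Lambda)$ comes from Lemma \ref{fur2}, on the commutativity of $A(x)$ with $\rho\big(\sqrt{A(x)}/\Lambda\big)$ so that the hierarchy (\ref{smcut6}) transfers to $A\tilde R_1=\tilde R_0$, $A\tilde R_2=2\tilde R_1-\tfrac{1}{16\pi^2}$, and on $C^1$ matching across $|x|=1/\Lambda$ plus exclusion of the singular radial mode. You diverge from the paper in two places, both legitimate. First, you \emph{derive} the outer branches by truncating Pizzetti's series (valid since for $|x|>1/\Lambda$ the closed averaging ball misses the origin and the iterates $A^2R_1$, $A^3R_2$ vanish there), obtaining $\tilde R_1=R_1-R_0/(8\Lambda^2)$ and $\tilde R_2=R_2-R_1/(4\Lambda^2)+R_0/(96\Lambda^4)+1/(128\pi^2\Lambda^2)$, which indeed reproduce the upper lines of (\ref{1:cal:17})--(\ref{1:cal:18}); the paper instead only \emph{verifies} that the stated functions satisfy the differential and matching conditions. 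Second, you fix the remaining additive constant by $C^0$ continuity of the inner branch against the already-determined outer branch, whereas the paper fixes it by the value at the origin, $\rho\big(\sqrt{A}/\Lambda\big)R_i\big|_{x=0}=R_i\big|_{|x|=1/\Lambda}$ (formula (\ref{1:cal:22})), read off from the homogenization integral; in either scheme the other condition becomes a consistency check. The one point to state more carefully is your remark that the $C^1$ match is ``automatic'': the clean justification is that $\tilde R_i$ solves $A\tilde R_i=(\text{continuous source})$ distributionally on all of $\mathbb{R}^4$, hence carries no single layer on the sphere $|x|=1/\Lambda$, which is exactly the elliptic-regularity argument you gesture at. With that spelled out, your derivation is complete and reproduces (\ref{1:cal:16})--(\ref{1:cal:18}); I verified that your truncated outer sums and the matched inner polynomials agree with the stated coefficients.
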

\begin{proof}
The first relation can be obtained quite easily using the series representation for the function $\rho(\cdot)$ and the fact that we can analyze only the first two terms, because $|x|^2+a$ is from the kernel of $A^k(x)$ for $k>1$. So, we move on to relation (\ref{1:cal:19}).

Actually, we need to verify the relation only for $i=1,2$, because the case $i=0$ was studied in the previous section. Indeed, the function $R_0(x)$ coincides with the function $G(|x|)$ from (\ref{Green}), when $d=4$. Hence, we can use all the formulae from Section \ref{1:sec:gener}, and, in particular, formula (\ref{1:cala:15}), from which we obtain equality (\ref{1:cal:16}), because $\tilde{R}_0(x)=G(|x|_\Lambda)$.

Then, to simplify calculations, we need to note some useful points. Firstly, the operator $A(x)$ commutes with $\rho\Big(\sqrt{A(x)}/\Lambda\Big)$. Hence, we can expand relations (\ref{smcut6}) on the functions $\tilde{R}_i(x)$. Secondly, the function $\tilde{R}_0(x)$ is continuous. Therefore, the functions $\tilde{R}_1(x)$ and $\tilde{R}_2(x)$ should have the following, additional to (\ref{smcut6}), properties
\begin{equation}
\label{1:cal:21}
\tilde{R}_i(x)\Big|_{|x|=1/\Lambda-0}=
\tilde{R}_i(x)\Big|_{|x|=1/\Lambda+0},\,\,\,
\partial_{x_{\mu}}\tilde{R}_i(x)\Big|_{|x|=1/\Lambda-0}=
\partial_{x_{\mu}}\tilde{R}_i(x)\Big|_{|x|=1/\Lambda+0},
\end{equation}
where $i=1,2$. It is quite easy to verify that the suggested functions satisfy all the conditions mentioned above.

However, the last conditions do not fix one arbitrariness, shift by a constant. To fix this, it is convenient to use the integral representation for $\rho$-operator. Additional relations are
\begin{equation}
\label{1:cal:22}
\rho\Big(\sqrt{A(x)}/\Lambda\Big)R_i(x)\Big|_{x=0}=R_i(x)\Big|_{|x|=1/\Lambda},
\end{equation}
where $i=1,2,3$. Substituting the functions $\tilde{R}_i(x)$ in the right hand side, we are convinced of the validity of the last relations. Hence, we get the main statement of the lemma.
\end{proof}

Further, let us write out the Taylor expansion for the $\rho$-operator up to $s^2$. For that we should note that three parts of the operator $\hat{M}_{2\mu\nu}(x;\alpha)$ have the dependence on the parameter $s$
\begin{equation}\label{1:cal:25}
-s\delta_{\mu\nu}x^\sigma\xi_{\sigma\rho}
\partial_{x_\rho}-\delta_{\mu\nu}\frac{s^2}{4}x^{\sigma\rho}\xi_{\sigma\beta}\xi_{\rho\beta}-2s\alpha\xi_{\mu\nu}.
\end{equation}
Actually, we are interested in the second and the third ones, because the first one gives zero after applying to a spherically-symmetric function. For example, to the function $R_i(x)$. Hence, we can represent the expansion in the form
\begin{multline}\label{1:cal:24}
\rho\bigg(\sqrt{\hat{M}_2(x;\alpha)}/\Lambda\bigg)_{\mu\nu}=
\delta_{\mu\nu}\rho\Big(\sqrt{A(x)}/\Lambda\Big)+s\rho_{1\mu\nu}(x)+
s^2\rho_{2\mu\nu}(x)\\+
\big(s\rho_{3\mu\nu}(x)+s^2\rho_{4\mu\nu}(x)\big)x_\sigma\xi_{\sigma\beta}\partial_{x_\beta}
+o\big(s^3\big),
\end{multline}
where we have two functionals $\rho_{1\mu\nu}(x)$ and $\rho_{2\mu\nu}(x)$, in which we are interested in, and two additional ones $\rho_{3\mu\nu}(x)$ and $\rho_{4\mu\nu}(x)$, which do not contribute in our case and, therefore, are not important. 

The first function can be obtained in a very simple way using the first derivative with respect the parameter $s$. For that let us consider the following chain of equalities
\begin{align}\label{1:cal:26}
\rho_{1\mu\nu}(x)&=\sum_{n=1}^{+\infty}\sum_{k=1}^{n}\frac{\big(-A(x)/4\Lambda^2\big)^{k-1}
\big(2\alpha\xi_{\mu\nu}/4\Lambda^2\big)
\big(-A(x)/4\Lambda^2\big)^{n-k}}{n!\Gamma(n+2)}\\\label{1:cal:27}
&=\frac{\alpha\xi_{\mu\nu}}{2\Lambda^2}\sum_{n=1}^{+\infty}\frac{
\big(-A(x)/4\Lambda^2\big)^{n-1}}{(n-1)!\Gamma(n+2)}=
\frac{2\alpha\xi_{\mu\nu}}{\Lambda^2}
\bigg(\frac{J_2(r)}{r^2}\bigg)\bigg|_{r^2=-A(x)/\Lambda^2}.
\end{align}
Indeed, we can reformulate the last relation using the derivative with respect to the parameter $\Lambda$ as follows
\begin{equation}\label{1:cal:28}
\rho_{1\mu\nu}(x)=
2\alpha\xi_{\mu\nu}
\Lambda^2A^{-1}(x)\frac{\partial}{\partial\Lambda^2}\rho\Big(\sqrt{A(x)}/\Lambda\Big).
\end{equation}

Actually, we are interested in both representations, because it is convenient to derive the general answer with the use of the second one, while the first one helps us to fix an integration parameter.

\begin{lemma}\label{1:cal:lem2}
Under the conditions described above, we have
\begin{equation}\label{1:cal:41}
\rho_{1\mu\nu}(x)R_i(x)=2\alpha\xi_{\mu\nu}\bar{R}_i(x),
\end{equation}
where $i=0,1$, and
\begin{equation}\label{1:cal:29}
\bar{R}_0(x)=\frac{1}{4\pi^2}
\begin{cases}
\,\,\frac{1}{8}|x|^{-2}\Lambda^{-2},&|x|>1/\Lambda;\\
\frac{1}{4}-\frac{1}{8}|x|^2\Lambda^2,&|x|\leqslant1/\Lambda,
\end{cases}
\end{equation}
\begin{equation}\label{1:cal:33}
\bar{R}_1(x)=\frac{1}{4\pi^2}\frac{1}{16\Lambda^2}
\begin{cases}
\,\,
-\frac{1}{2}\ln(|x|^2\mu^2)-\frac{1}{6}|x|^{-2}\Lambda^{-2},&|x|>1/\Lambda;\\
L-\frac{1}{4}-\frac{1}{2}|x|^2\Lambda^2+\frac{1}{12}|x|^4\Lambda^4,&|x|\leqslant1/\Lambda.
\end{cases}
\end{equation}
\end{lemma}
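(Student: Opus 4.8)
The plan is to reduce the computation of $\rho_{1\mu\nu}(x)R_i(x)$ to a Poisson inversion of objects already determined in Lemma~\ref{1:cal:lem1}, exactly along the lines indicated in the paragraph preceding the statement: obtain the general shape of the answer from representation (\ref{1:cal:28}) and fix the single remaining integration constant from the series/Bessel representation (\ref{1:cal:27}). Since the scalar operators $A(x)$, $A^{-1}(x)$ and $\partial/\partial\Lambda^2$ all commute with $\rho\big(\sqrt{A(x)}/\Lambda\big)$, applying (\ref{1:cal:28}) to $R_i(x)$ and then invoking (\ref{1:cal:19}) turns the left side of (\ref{1:cal:41}) into $2\alpha\xi_{\mu\nu}$ times $\bar{R}_i(x)=\Lambda^2 A^{-1}(x)\,\partial_{\Lambda^2}\tilde{R}_i(x)$. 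Equivalently, $\bar{R}_i$ is characterized as a solution of the Poisson equation $A(x)\bar{R}_i(x)=\Lambda^2\partial_{\Lambda^2}\tilde{R}_i(x)$, whose right side can be read off explicitly in each of the two regions $|x|<1/\Lambda$ and $|x|>1/\Lambda$ from the formulas (\ref{1:cal:16}) and (\ref{1:cal:17}).

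First I would compute $\Lambda^2\partial_{\Lambda^2}\tilde{R}_i$. For $i=0$ this is $\Lambda^2/(4\pi^2)$ inside the ball and $0$ outside; for $i=1$ a short differentiation (using $\partial_{\Lambda^2}L=1/(2\Lambda^2)$ for the logarithmic piece) shows that it equals $\bar{R}_0$ itself, so the two cases are linked by the clean recurrence $A(x)\bar{R}_1(x)=\bar{R}_0(x)$, paralleling the structure of (\ref{smcut6}). This is the step where the logarithmic $\Lambda$-dependence must be differentiated carefully. Next, in each region separately I would solve the radial Poisson equation: in $d=4$ the homogeneous solutions are $c_1+c_2/|x|^2$, and a particular solution is elementary (a combination of $|x|^2$, $|x|^4$ and $\ln|x|^2$ according to the region and the right-hand side). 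I would then discard the $1/|x|^2$ mode inside the ball by regularity at the origin, discard the constant mode of $\bar{R}_0$ outside by decay at infinity, and glue the inner and outer pieces by the $C^1$ matching conditions at $|x|=1/\Lambda$ of the same type as (\ref{1:cal:21}), which are legitimate because $\tilde{R}_0$, and hence every function derived from it, inherits the continuity/smoothness structure already exploited in Lemma~\ref{1:cal:lem1}.

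The main obstacle is the same one flagged in the proof of Lemma~\ref{1:cal:lem1}: the Poisson equation together with the $C^1$ gluing still leaves one additive constant undetermined, and for $\bar{R}_1$ this freedom also controls the precise argument $\ln(|x|^2\mu^2)$ and the scale $\mu$, since $\Lambda^2\partial_{\Lambda^2}\tilde{R}_i$ carries no information whatsoever about $\mu$ (the logarithm and the $\mu$-dependence are produced entirely by the inversion $A^{-1}$ of $|x|^{-2}$ in the outer region, where decay cannot be used because the solution grows like $\ln|x|$). I would fix this residual constant exactly as in (\ref{1:cal:22}), by evaluating at the origin through the series/Bessel representation (\ref{1:cal:27}) of the $\rho_1$-operator applied to $R_i$; the origin value is precisely where the scale $\mu$ of $R_1$ re-enters and pins the constant down. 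Finally, I would verify that the proposed functions (\ref{1:cal:29}) and (\ref{1:cal:33}) satisfy the Poisson equation in each region, the $C^1$ matching at $|x|=1/\Lambda$, and the origin normalization, which together establish (\ref{1:cal:41}).
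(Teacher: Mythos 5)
Your proposal follows essentially the same route as the paper: starting from representation (\ref{1:cal:28}), reducing $\bar{R}_i$ to $\Lambda^2A^{-1}(x)\partial_{\Lambda^2}\tilde{R}_i(x)$ (equivalently the recurrence $A(x)\bar{R}_1=\bar{R}_0$), solving the radial Poisson equation region by region with the $C^1$ gluing of type (\ref{1:cal:21}), and pinning the residual constants via the series/Bessel representation (\ref{1:cal:27}), exactly as the paper does when it shows $g(\Lambda)=0$ through $\int_{\mathbb{R}_+}dr\,J_2(r)/r=1/2$. The only divergence is for $\bar{R}_1$, where you fix the last constant by an origin evaluation as in (\ref{1:cal:22}) while the paper instead compares asymptotics at $|x|\gg1/\Lambda$ where the operator series truncates (since $A^k(x)R_1(x)=0$ for $k>1$); your variant is workable in principle but requires handling the infrared-divergent Fourier image of $R_1$ (behaving like $|k|^{-4}$ in $d=4$) together with its $\mu$-regularization, a subtlety the paper's large-$|x|$ argument sidesteps.
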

\begin{proof}
Let us start from formula (\ref{1:cal:28}) and apply it to $R_0(x)$. We will proceed in stages, and firstly we use formulae (\ref{1:cal:19}) and (\ref{1:cal:16}) from Lemma \ref{1:cal:lem1}. We have the following chain (without $\alpha\xi_{\mu\nu}$)
\begin{align}\label{1:cal:30}
2\Lambda^2A^{-1}(x)\frac{d}{d\Lambda^2}\rho\Big(\sqrt{A(x)}/\Lambda\Big)R_0(x)&=
2\Lambda^2\frac{d}{d\Lambda^2}A^{-1}(x)\tilde{R}_0(x)=g(\Lambda)+
2\Lambda^2\frac{d}{d\Lambda^2}\tilde{R}_1(x)\\\label{1:cal:31}
&=g(\Lambda)+\frac{1}{4\pi^2}
\begin{cases}
\,\,\frac{1}{4}|x|^{-2}\Lambda^{-2},&|x|>1/\Lambda;\\
\frac{1}{2}-\frac{1}{4}|x|^2\Lambda^2,&|x|\leqslant1/\Lambda,
\end{cases}
\end{align}
where we have introduced an auxiliary function $g(\Lambda)$.
Hence, we need to verify that the function $g(\Lambda)$ is equal to zero. For that we use representation (\ref{1:cal:27}) and the Fourier transform for $R_0(x)$. Using the transition to the spherical coordinates, we have
(without $\alpha\xi_{\mu\nu}$)
\begin{equation}\label{1:cal:32}
\frac{2}{\Lambda^2}\frac{1}{(2\pi)^4}\int_{\mathbb{R}^4}d^4y\,e^{ix_\mu y^\mu}
J_2(|y|/\Lambda)|y|^{-4}\Lambda^2\bigg|_{x=0}=
\frac{1}{4\pi^2}\int_{\mathbb{R}_+}\frac{dr}{r}\,
J_2(r)=\frac{1}{4\pi^2}\frac{1}{2},
\end{equation}
from which the equality $g(\Lambda)=0$ follows.

The second relation of the Lemma can be obtained from the previous one, applying the operator $A^{-1}(x)$, because, due to the commutativity, we have the relation $A(x)\tilde{R}_1(x)=\tilde{R}_0(x)$, see (\ref{smcut6}). Then, satisfying additional conditions of smoothness, as it was performed in Lemma \ref{1:cal:lem1}, see (\ref{1:cal:21}), we obtain the result. Let us additionally note that the constant of integration can be achieved by comparison of asymptotics in the region $|x|\gg1/\Lambda$, where the function is smooth and $A^k(x)R_1(x)=0$ for all $k>1$.
\end{proof}

Now we are ready to find the last function under the consideration, see $\rho_{2\mu\nu}$ in (\ref{1:cal:24}). In the case we need to investigate the second derivative with respect to the parameter $s^2$. Using the series representation, we can write out the following chain of equations

\begin{align}\label{1:cal:34}
\rho_{2\mu\nu}(x)=&\sum_{n=1}^{+\infty}\sum_{k=1}^{n}\frac{\big(-A(x)/4\Lambda^2\big)^{k-1}
\big(|x|^2\delta_{\mu\nu}\xi_{\sigma\beta}\xi_{\sigma\beta}/4^3\Lambda^2\big)
\big(-A(x)/4\Lambda^2\big)^{n-k}}{n!\Gamma(n+2)}\\\label{1:cal:35}
&-\frac{\alpha^2\delta_{\mu\nu}\xi_{\sigma\beta}\xi_{\sigma\beta}}{(4\Lambda^2)^2}\sum_{n=2}^{+\infty}\frac{\big(-A(x)/4\Lambda^2\big)^{n-2}}{n!\Gamma(n+2)}
\sum_{k=1}^{n}(k-1)\\\label{1:cal:36}
=&\frac{|x|^2\delta_{\mu\nu}\xi_{\sigma\beta}\xi_{\sigma\beta}}{2^6\Lambda^2}
\sum_{n=1}^{+\infty}\frac{\big(-A(x)/4\Lambda^2\big)^{n-1}}{(n-1)!\Gamma(n+2)}+
\frac{\delta_{\mu\nu}\xi_{\sigma\beta}\xi_{\sigma\beta}x^\eta\partial_{x_\eta}}{2^7\Lambda^4}
\sum_{n=2}^{+\infty}\frac{\big(-A(x)/4\Lambda^2\big)^{n-2}}{(n-2)!\Gamma(n+2)}\\\label{1:cal:37}
&+\frac{\delta_{\mu\nu}\xi_{\sigma\beta}\xi_{\sigma\beta}}{2^63\Lambda^4}
\sum_{n=2}^{+\infty}\frac{\big(-A(x)/4\Lambda^2\big)^{n-2}}{(n-2)!\Gamma(n+1)}-
\frac{\alpha^2\delta_{\mu\nu}\xi_{\sigma\beta}\xi_{\sigma\beta}}{2^5\Lambda^4}\sum_{n=2}^{+\infty}\frac{\big(-A(x)/4\Lambda^2\big)^{n-2}}{(n-2)!\Gamma(n+2)},
\end{align}
where we have used the following commutation relation
\begin{equation}\label{1:cal:38}
\big(-A(x)\big)^{k-1}|x|^2=|x|^2\big(-A(x)\big)^{k-1}+
4(k-1)x^\eta\partial_{x_\eta}\big(-A(x)\big)^{k-2}+4k(k-1)\big(-A(x)\big)^{k-2},
\end{equation}
and two auxiliary equalities
\begin{equation}\label{1:cal:39}
\sum_{k=1}^n(k-1)=\frac{n(n-1)}{2},\,\,\,
\sum_{k=1}^nk(k-1)=\frac{(n+1)n(n-1)}{3}.
\end{equation}
Then, we need to apply the operator sums from (\ref{1:cal:36}) and (\ref{1:cal:37}) to the function $R_0(x)$. But, firstly, we rewrite the terms in the operator form, preserving their order, as
\begin{equation}\label{1:cal:40}
\rho_{2\mu\nu}(x)
=\frac{\delta_{\mu\nu}\xi_{\sigma\beta}\xi_{\sigma\beta}}{2^5}
\bigg(\frac{|x|^2}{2\Lambda^2}+\bigg(x^\eta\partial_{x_\eta}-\frac{2}{3}
\Lambda^4\frac{\partial}{\partial\Lambda^2}\Lambda^{-2}-4\alpha^2\bigg)
A^{-1}(x)
\frac{\partial}{\partial\Lambda^2}\bigg)\sum_{n=1}^{+\infty}\frac{\big(-A(x)/4\Lambda^2\big)^{n-1}}{(n-1)!\Gamma(n+2)}.
\end{equation}

\begin{lemma}\label{1:cal:lem3}	
Under the conditions described above, we have
\iffalse
\begin{align}\label{1:cal:142}
\rho_{2\mu\nu}(x)R_{0}(x)=\frac{\delta_{\mu\nu}\xi_{\sigma\beta}\xi_{\sigma\beta}}{2^5}
\Bigg(&\frac{1}{4\pi^2}\frac{1}{4\Lambda^2}\begin{cases}
\,\,\,\,\,\,\,\,\,\,\,\,\,\,\,\,\,\,\,\,\,\,\,\,\,\,\,
1,&|x|>1/\Lambda\\
\frac{2}{3}+\frac{2}{3}|x|^2\Lambda^2-\frac{1}{3}|x|^4\Lambda^4,&|x|\leqslant1/\Lambda
\end{cases}\\\label{1:cal:149}-&
\frac{1}{4\pi^2}\frac{\alpha^2}{\Lambda^2}
\begin{cases}
\,\,\,\,\,\,\,\,\,\,\,\,\,\,\,\,
\frac{1}{6}|x|^{-2}\Lambda^{-2},&|x|>1/\Lambda\\
\frac{1}{2}-\frac{1}{2}|x|^2\Lambda^2+\frac{1}{6}|x|^4\Lambda^4,&|x|\leqslant1/\Lambda
\end{cases}\Bigg),
\end{align}
\fi
\begin{equation}\label{1:cal:42}
\rho_{2\mu\nu}(x)R_{0}(x)=\frac{\delta_{\mu\nu}\xi_{\sigma\beta}\xi_{\sigma\beta}}{2^5}
\Bigg(2|x|^2\bar{R}_0(x)
+\frac{1}{\Lambda^2}\hat{R}_0(x)-
\frac{\alpha^2}{\Lambda^2}
\bigg(\frac{4}{3}\bar{R}_0(x)+\hat{R}_0(x)\bigg)\Bigg),
\end{equation}
where
\begin{equation}\label{1:cal:50}
\hat{R}_0(x)=\frac{1}{4\pi^2}\begin{cases}
\,\,\,\,\,\,\,\,\,\,\,\,\,\,\,\,\,\,\,\,\,\,\,\,\,\,
0,&|x|>1/\Lambda;\\
\frac{1}{6}-\frac{1}{3}|x|^2\Lambda^2+\frac{1}{6}|x|^4\Lambda^4,&|x|\leqslant1/\Lambda.
\end{cases}
\end{equation}
\end{lemma}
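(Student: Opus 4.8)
The plan is to reduce everything to the single operator series $\Sigma(x):=\sum_{n=1}^{+\infty}\big(-A(x)/4\Lambda^2\big)^{n-1}/\big((n-1)!\,\Gamma(n+2)\big)$ that stands to the right in (\ref{1:cal:40}). Comparing the definition $\rho_{1\mu\nu}=\tfrac{\alpha\xi_{\mu\nu}}{2\Lambda^2}\Sigma$ from (\ref{1:cal:27}) with the result $\rho_{1\mu\nu}R_0=2\alpha\xi_{\mu\nu}\bar R_0$ of Lemma \ref{1:cal:lem2}, I would first record the identity $\Sigma(x)R_0(x)=4\Lambda^2\bar R_0(x)$, with $\bar R_0$ given explicitly by (\ref{1:cal:29}). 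Writing the operator in the bracket of (\ref{1:cal:40}) as $T_1+T_2+T_3+T_4$, where $T_1$ is multiplication by $|x|^2/(2\Lambda^2)$, $T_2=x^\eta\partial_{x_\eta}A^{-1}(x)\partial_{\Lambda^2}$, $T_3=-\tfrac23\Lambda^4\partial_{\Lambda^2}\Lambda^{-2}A^{-1}(x)\partial_{\Lambda^2}$ and $T_4=-4\alpha^2A^{-1}(x)\partial_{\Lambda^2}$, the contribution of $T_1$ to $\rho_{2\mu\nu}R_0$ collapses immediately to $2|x|^2\bar R_0$, which is the first summand of (\ref{1:cal:42}).

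The heart of the proof is the single radial function $W(x):=A^{-1}(x)\,\partial_{\Lambda^2}\big(4\Lambda^2\bar R_0(x)\big)$, on which $T_2$, $T_3$ and $T_4$ all act. I would produce $W$ explicitly in two steps. First, differentiate the piecewise expression (\ref{1:cal:29}) with respect to $\Lambda^2$; because $4\Lambda^2\bar R_0$ is $C^1$ across the moving boundary $|x|=1/\Lambda$ the derivative carries no distributional boundary term and is supported in the ball $|x|\leqslant1/\Lambda$. Second, invert the radial Laplacian: outside the ball the right-hand side vanishes so $W=c_1+c_2|x|^{-2}$ (the two $d=4$ homogeneous radial solutions), while inside one adds a particular solution polynomial in $|x|^2$ and $|x|^4$ together with the regular homogeneous piece. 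The constants are then fixed exactly as in Lemmas \ref{1:cal:lem1} and \ref{1:cal:lem2}: regularity at the origin kills the $|x|^{-2}$ piece inside, the $C^1$-matching conditions of the type (\ref{1:cal:21}) at $|x|=1/\Lambda$ relate the remaining constants, and the residual additive constant is pinned by a value-at-the-origin condition of the type (\ref{1:cal:22}). This inversion, with the careful bookkeeping of the $\Lambda$-dependence entering both through explicit powers of $\Lambda$ and through the combinations $|x|^{\pm2}\Lambda^{\mp2}$, is the step I expect to be the main obstacle.

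With $W$ available the remaining operators become elementary: $T_2$ reduces on a radial function to the dilation $|x|\partial_{|x|}W$, a direct computation gives $T_3\big(4\Lambda^2\bar R_0\big)=\tfrac23W-\tfrac23\Lambda^2\partial_{\Lambda^2}W$, and $T_4\big(4\Lambda^2\bar R_0\big)=-4\alpha^2W$. Collecting the $\alpha^2$-proportional terms then forces $W=\tfrac{1}{3\Lambda^2}\bar R_0+\tfrac{1}{4\Lambda^2}\hat R_0$; this is exactly how the new function $\hat R_0$ enters, and I would take it as the definition of $\hat R_0$, checking afterwards that it coincides with (\ref{1:cal:50}) and that it vanishes for $|x|>1/\Lambda$ and is $C^1$ at $|x|=1/\Lambda$. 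The $\alpha$-independent terms must then reproduce $\Lambda^{-2}\hat R_0$, i.e. the identity $|x|\partial_{|x|}W+\tfrac23W-\tfrac23\Lambda^2\partial_{\Lambda^2}W=\Lambda^{-2}\hat R_0$ has to hold; I would verify it directly from the explicit $W$ and $\hat R_0$, thereby assembling (\ref{1:cal:42}).
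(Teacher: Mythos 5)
Your proposal is correct and follows essentially the same route as the paper: the same splitting of the operator form (\ref{1:cal:40}) into the multiplication piece plus the terms built on $W=A^{-1}(x)\partial_{\Lambda^2}\big(\Sigma(x)R_0(x)\big)$, with the constants of inversion fixed by $C^1$-matching at $|x|=1/\Lambda$ and a normalization at the origin (which the paper implements via the spectral representation and the integral $\int_0^{\infty}dr\,J_3(r)/r^2=1/8$). The only cosmetic difference is that you obtain $W$ by solving the radial Poisson problem for $\partial_{\Lambda^2}\big(4\Lambda^2\bar{R}_0\big)$ directly, whereas the paper reads it off as $\partial_{\Lambda^2}\big(4\Lambda^2\bar{R}_1\big)$ using Lemma \ref{1:cal:lem2}; both yield the same piecewise function (\ref{1:cal:46}).
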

\begin{proof}
We are going to use some useful formulae derived above. Firstly, using relations 
(\ref{1:cal:27}) and (\ref{1:cal:41})--(\ref{1:cal:33}), we get
\begin{equation}\label{1:cal:43}
\sum_{n=1}^{+\infty}\frac{\big(-A(x)/4\Lambda^2\big)^{n-1}}{(n-1)!\Gamma(n+2)}R_0(x)=
4\Lambda^2\bar{R}_0(x),
\end{equation}
\begin{align}\label{1:cal:44}
A^{-1}(x)\frac{\partial}{\partial\Lambda^2}
\sum_{n=1}^{+\infty}\frac{\big(-A(x)/4\Lambda^2\big)^{n-1}}{(n-1)!\Gamma(n+2)}R_0(x)&=
g(\Lambda)+\frac{\partial}{\partial\Lambda^2}4\Lambda^2\bar{R}_1(x)\\\label{1:cal:46}
&=g(\Lambda)+\frac{1}{4\pi^2}\frac{1}{4}
\begin{cases}
\,\,\,\,\,\,\,\,\,\,\,\,\,\,\,\,\,
\frac{1}{6}|x|^{-2}\Lambda^{-4},&|x|>1/\Lambda;\\
\frac{1}{2}\Lambda^{-2}-\frac{1}{2}|x|^2+\frac{1}{6}|x|^4\Lambda^2,&|x|\leqslant1/\Lambda,
\end{cases}
\end{align}
where $g(\Lambda)$ is a constant of integration, which is actually equal to zero. It is easy to verify using the Fourier transform at zero. Indeed, the left hand side of (\ref{1:cal:44}) at $x=0$ equals
\begin{align}\label{1:cal:45}
\frac{1}{4\Lambda^4}
\sum_{n=2}^{+\infty}\frac{\big(-A(x)/4\Lambda^2\big)^{n-2}}{(n-2)!\Gamma(n+2)}R_0(x)\Bigg|_{x=0}&=\frac{1}{4\Lambda^2}\frac{1}{(2\pi)^4}\int_{\mathbb{R}^4}d^4y\,\frac{8J_3(|y|/\Lambda)}{|y|^5}\Lambda^3\\\label{1:cal:48}&=
\frac{1}{4\pi^2\Lambda^2}\int_{\mathbb{R}_+}dr\,\frac{J_3(r)}{r^2}=
\frac{1}{4\pi^2}\frac{1}{8\Lambda^2},
\end{align}
where we have used the relation
\begin{equation}\label{1:cal:47}
\sum_{n=2}^{+\infty}\frac{\big(-r^2/4\big)^{n-2}}{(n-2)!\Gamma(n+2)}=\frac{8J_3(r)}{r^3},
\end{equation}
and the transition to the spherical coordinates. Then, applying the differential operator from (\ref{1:cal:40}) to (\ref{1:cal:46}), we get the final result.
\end{proof}

Now we are ready to formulate the final result of this subsection for the Green's functioned mentioned in (\ref{1:cal:11}), (\ref{1:cal:7}), and (\ref{1:cal:23}).
\begin{theorem}\label{1:cal:th1}
Let $x,y\in\mathbb{R}^4$, $z=x-y$, $\Lambda\gg 0$, and $s$ is a quite small positive number. Then, under the conditions described above, we have
\begin{equation}\label{1:cal:55}
\rho\bigg(\sqrt{M_2(x;\alpha)}/\Lambda\bigg)_{\mu\sigma}G_{2\sigma\nu}^{\phantom{r}}(x,y;\alpha)=
\Phi(x,y)\hat{G}_{2\mu\nu}^{\Lambda}(z;\alpha),
\end{equation}
where
\begin{align}\label{1:cal:51}
\hat{G}_{2\mu\nu}^{\Lambda}(z;\alpha)&=\tilde{R}_0(z)\delta_{\mu\nu}+
2\alpha s\xi_{\mu\nu}\Big(\bar{R}_0(z)+\tilde{R}_1(z)\Big)\\\label{1:cal:53}
&+s^2\delta_{\mu\nu}\xi_{\sigma\beta}\xi_{\sigma\beta}
\Bigg(\frac{3|z|^2\Lambda^2-2\alpha^2}{2^43\Lambda^2}\bar{R}_0(z)
+\frac{1-\alpha^2}{2^5\Lambda^2}\hat{R}_0(z)-\alpha^2\bar{R}_1(z)\\\label{1:cal:52}
&\,\,\,\,\,\,\,\,\,\,\,\,\,\,\,\,\,\,\,\,\,\,\,\,\,\,\,\,\,\,\,\,\,\,\,\,\,\,\,\,\,
\,\,\,\,\,\,\,\,\,\,\,\,\,\,\,\,\,\,\,\,\,\,\,\,\,\,\,\,-
\frac{\alpha^2}{2}\tilde{R}_2(z)-\frac{\big(|z|^2+\Lambda^{-2}\big)(1-2\alpha^2)}{2^9\pi^2}\Bigg)+\mathcal{O}\big(s^3\big).
\end{align}
Moreover, as a particular case of the last formula we have results for both initial operators (\ref{1:cal:10}) and their Green's functions (\ref{1:cal:11})
\begin{equation}\label{1:cal:54}
\rho\bigg(\sqrt{M_0(x)}/\Lambda\bigg)G_{0}^{\phantom{r}}(x,y)=
\frac{1}{4}\Phi(x,y)\hat{G}_{2\mu\mu}^{\Lambda}(z;0),
\end{equation}
\begin{equation}\label{1:cal:56}
\rho\bigg(\sqrt{M_1(x)}/\Lambda\bigg)_{\mu\sigma}G_{1\sigma\nu}^{\phantom{r}}(x,y)=
\Phi(x,y)\hat{G}_{2\mu\nu}^{\Lambda}(z;1).
\end{equation}
\end{theorem}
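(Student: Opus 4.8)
The plan is to exploit the gauge covariance from condition $(iv)$ to strip off the Fock--Schwinger factor $\Phi$, reducing the problem to the purely radial computation $\rho\big(\sqrt{\hat M_2}/\Lambda\big)\hat G_2$ that Lemmas \ref{1:cal:lem1}--\ref{1:cal:lem3} have already prepared. From the defining relation (\ref{1:cal:8}) we have $\hat M_2(x-y;\alpha)=\Phi(y,x)M_2(x;\alpha)\Phi(x,y)$, and since $\xi$ is antisymmetric in its spatial indices one checks $y_\mu\xi_{\mu\nu}x_\nu=-x_\mu\xi_{\mu\nu}y_\nu$ as matrices, so $\Phi(y,x)=\Phi(x,y)^{-1}$ and $\Phi(y,x)\Phi(x,y)=1$. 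Because conjugation by $\Phi$ passes through any entire function, $\rho\big(\sqrt{M_2}/\Lambda\big)=\Phi(x,y)\,\rho\big(\sqrt{\hat M_2}/\Lambda\big)\,\Phi(y,x)$, and together with the gauge relation $G_2(x,y;\alpha)=\Phi(x,y)\hat G_2(z;\alpha)$ the middle factors cancel, giving $\rho\big(\sqrt{M_2}/\Lambda\big)G_2=\Phi(x,y)\,\rho\big(\sqrt{\hat M_2}/\Lambda\big)\hat G_2(z;\alpha)$. This is exactly the shape (\ref{1:cal:55}) with $\hat G^{\Lambda}_2:=\rho\big(\sqrt{\hat M_2}/\Lambda\big)\hat G_2$, so it remains only to identify this scalar object with (\ref{1:cal:51})--(\ref{1:cal:52}).

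Next I would insert the operator expansion (\ref{1:cal:24}) on the left and the Green's function expansion (\ref{1:cal:7}) on the right, form the matrix product, and collect powers of $s$ up to $s^2$. The derivative block $x_\sigma\xi_{\sigma\beta}\partial_{x_\beta}$ appearing in (\ref{1:cal:24}) annihilates every $R_i$: each $R_i$ is radial, hence $\partial_{x_\beta}R_i\propto x_\beta$, and $x_\sigma\xi_{\sigma\beta}x_\beta=0$ by antisymmetry of $\xi$. Thus only the three pieces $\delta_{\mu\nu}\rho\big(\sqrt{A}/\Lambda\big)$, $\rho_{1\mu\nu}$ and $\rho_{2\mu\nu}$ contribute, and each is handled by the lemmas with the constant matrix $\xi$ commuting freely past the scalar operator $\rho\big(\sqrt{A}/\Lambda\big)$.

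The term-by-term assembly then proceeds by order in $s$. At order $s^0$, Lemma \ref{1:cal:lem1} gives $\rho\big(\sqrt{A}/\Lambda\big)R_0=\tilde R_0$. At order $s^1$ two contributions combine: $\delta_{\mu\sigma}\rho\big(\sqrt{A}/\Lambda\big)$ acting on $2\alpha\xi_{\sigma\nu}R_1$ yields $2\alpha\xi_{\mu\nu}\tilde R_1$, while $\rho_{1\mu\sigma}$ on $R_0\delta_{\sigma\nu}$ yields $2\alpha\xi_{\mu\nu}\bar R_0$ by Lemma \ref{1:cal:lem2}, summing to $2\alpha s\xi_{\mu\nu}\big(\bar R_0+\tilde R_1\big)$. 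At order $s^2$ there are three pieces: $\rho\big(\sqrt{A}/\Lambda\big)$ on the explicit $s^2$ part of (\ref{1:cal:7}), where $\rho\big(\sqrt{A}/\Lambda\big)|z|^2=|z|^2+\Lambda^{-2}$ by (\ref{1:cal:20}); the cross term $\rho_{1\mu\sigma}\cdot 2\alpha\xi_{\sigma\nu}R_1=4\alpha^2\xi_{\mu\sigma}\xi_{\sigma\nu}\bar R_1$; and $\rho_{2\mu\sigma}$ on $R_0\delta_{\sigma\nu}$ supplied by Lemma \ref{1:cal:lem3}. Applying the contraction identity $\xi_{\mu\sigma}\xi_{\sigma\nu}=-\delta_{\mu\nu}\xi_{\rho\beta}\xi_{\rho\beta}/4$ turns every quadratic $\xi$ factor into a multiple of $\delta_{\mu\nu}\xi_{\rho\beta}\xi_{\rho\beta}$, after which collecting the coefficients of $\bar R_0,\hat R_0,\bar R_1,\tilde R_2$ and the polynomial remainder reproduces (\ref{1:cal:53})--(\ref{1:cal:52}). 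The two special cases follow by specialization: $\alpha=0$ collapses $M_{2\mu\nu}(x;0)=M_0\delta_{\mu\nu}$ and $G_{2\mu\nu}(x,y;0)=G_0\delta_{\mu\nu}$, so the trace identity $M_0=\tfrac14 M_{2\mu\mu}(x;0)$ of (\ref{1:cal:10}) gives (\ref{1:cal:54}), while $\alpha=1$ gives $M_1=M_2(x;1)$ and hence (\ref{1:cal:56}).

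The main obstacle is the $s^2$ bookkeeping rather than any conceptual point. One must track the noncommuting factors $\xi$, $A^{-1}(x)$ and $\rho\big(\sqrt{A}/\Lambda\big)$ through the series manipulations (\ref{1:cal:34})--(\ref{1:cal:40}), and—crucially—verify that the integration constants left free by the smoothness matching (\ref{1:cal:21}) are pinned down correctly by the zero-mode conditions (\ref{1:cal:22}), so that the piecewise functions $\tilde R_i$, $\bar R_i$, $\hat R_0$ glue into a single $C^1$ expression across $|z|=1/\Lambda$ with the right constants. Once the lemmas are taken as given, however, this reduces to careful but routine collection of coefficients.
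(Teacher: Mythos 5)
Your proposal is correct and follows essentially the same route as the paper, whose proof is the one-line instruction to act with the operator expansion (\ref{1:cal:24}) on the decomposition (\ref{1:cal:7}) and invoke Lemmas \ref{1:cal:lem1}--\ref{1:cal:lem3}; you have simply made explicit the gauge-conjugation step $\rho(\sqrt{M_2}/\Lambda)=\Phi\,\rho(\sqrt{\hat M_2}/\Lambda)\,\Phi^{-1}$ and the order-by-order bookkeeping that the paper leaves implicit. The coefficient collection checks out, including the contraction $\xi_{\mu\sigma}\xi_{\sigma\nu}=-\delta_{\mu\nu}\xi_{\rho\beta}\xi_{\rho\beta}/4$ turning the cross term into $-\alpha^2\bar R_1$ and the Lemma \ref{1:cal:lem3} output into the $\bar R_0,\hat R_0$ coefficients of (\ref{1:cal:53}).
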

\begin{proof}
The statements of the theorem follow from acting by operator (\ref{1:cal:24}) on decomposition (\ref{1:cal:7}) and using the corresponding lemmas proved above.
\end{proof}

\subsection{2-D Sigma-model}

Now we move on to the next example, a two-dimensional Sigma-model. Moreover, we consider the simplest case, the classical action of which can be represented in the form, see \cite{sig2,sig1},
\begin{equation}\label{1:cala:1}
S[g]=\frac{1}{\gamma^2}\int_{\mathbb{R}^2}d^2x\,\mathrm{tr}\Big(\big(\partial_{x_\mu}g(x)\big)
\big(\partial_{x^\mu}g^{-1}(x)\big)\Big),
\end{equation}
where $g(x)\in SU(N)$ for $N\in\mathbb{N}$, $\mu\in\{1,2\}$, and $\gamma$ is a coupling constant. 

Investigating the path integral for the classical action and using the background field method lead to the Feynman perturbative expansion, the propagator for which is constructed using the following Laplace-type operator
\begin{equation}\label{1:cala:2}
M^{ab}(x)=-D_{x_{\mu}}^{ab}\partial_{x^\mu}^{\phantom{a}},
\end{equation}
where the covariant derivative is defined by the same formula (\ref{1:cal:4}), but for the two-dimensional case. The connection components take its values in the Lie algebra $\mathfrak{su}(N)$. Of course, we assume that the main relations for the generators from (\ref{constdef}) hold as well.

As it follows from the explicit calculations of divergences, the non-zero contribution gives only the quadratic density $B_\mu^a(x)B_\mu^a(x)$. It means that we can introduce some convenient simplifications. Firstly, we exclude the variable $x$ from the connection components. Hence, we obtain the constant fields, such that $\partial_{x^\mu}B_\nu(x)=0$ for all $\mu$ and $\nu$. Secondly, we add the commutativity, such that the matrices $B_\mu$, components of which are $f^{abc}B^b_\mu$, commute with each other $[B_\mu,B_\nu]=0$ for all $\mu$ and $\nu$. Thirdly, we want the connection components to be small. To satisfy the last condition, we multiply them by an auxiliary small parameter $s>0$.

It is easy to verify that the first two conditions also can be satisfied easily. For example, we can take the connection components in the form $B_\mu^a=1$ for all $a$ and $\mu$.

Let us define two auxiliary matrix-valued operators
\begin{equation}\label{1:cala:3}
\Phi_0(x-y)=\exp\Big(-s(x-y)^\mu B_\mu/2\Big),\,\,\,
m=-\frac{1}{4}B_\mu B_\mu.
\end{equation}

Hence, after performing the transition to the Fock--Schwinger gauge condition, as it was made in (\ref{1:cal:8}), we obtain the following new operator under the study
\begin{equation}\label{1:cala:4}
\hat{M}(x)=\Phi_0(y-x)M(x)\Phi_0(x-y)=
-\partial_{x_\mu}\partial_{x^\mu}-s^2m.
\end{equation}
The Green's functions, corresponding to the last operators, have the following expansions,
see \cite{33,15,psi},
\begin{equation}\label{1:cala:5}
\hat{G}(x-y)=R_0(x-y)+s^2mR_1(x-y)+o\big(s^2\big),\,\,\,
G(x-y)=\Phi_0(x,y)\hat{G}(x-y),
\end{equation}
where we have used the two-dimensional analogies for the functions
\begin{equation}\label{1:cala:6}
R_0(x)=-\frac{1}{4\pi}\ln(|x|^2\mu^2),\,\,\,
R_1(x)=\frac{|x|^2\big(\ln(|x|^2\mu^2)-2\big)}{16\pi},
\end{equation}
which satisfies the following equations
\begin{equation}\label{1:cala:7}
A(x)R_0(x)=\delta(x),\,\,\,
A(x)R_1(x)=R_0(x),\,\,\,A(x)=-\partial_{x_\mu}\partial_{x^\mu}.
\end{equation}

Now we are ready to formulate and prove the main result of this subsection.

\begin{theorem}\label{1:cal:th2}
Let $x,y\in\mathbb{R}^2$, $\Lambda\gg 0$, $s\to+0$, and $\rho(r)=J_0(|r|)$. Then, under the conditions described above, we have
\begin{equation}\label{1:cala:8}
\rho\bigg(\sqrt{\hat{M}(x)}/\Lambda\bigg)\hat{G}(x-y)=
\tilde{R}_0(x-y)+s^2m\Big(\tilde{R}_1(x-y)+\hat{R}_0(x-y)\Big)+o\big(s^2\big),\,\,\,
\end{equation}
and
\begin{equation}\label{1:cala:9}
\rho\Big(\sqrt{M(x)}/\Lambda\Big)G(x-y)=\Phi_0(x-y)
\rho\bigg(\sqrt{\hat{M}(x)}/\Lambda\bigg)\hat{G}(x-y)
,
\end{equation}
where
\begin{equation}\label{1:cala:10}
\tilde{R}_0(x)=
\rho\Big(\sqrt{A(x)}/\Lambda\Big)R_0(x)=\frac{1}{4\pi}
\begin{cases}
-\ln(|x|^2\mu^2),&|x|>1/\Lambda;\\
\,\,\,\,\,\,\,\,\,\,\,\,\,\,
2L,&|x|\leqslant1/\Lambda,
\end{cases}
\end{equation}
\begin{equation}\label{1:cala:11}
\tilde{R}_1(x)=
\rho\Big(\sqrt{A(x)}/\Lambda\Big)R_1(x)
=\frac{1}{4\pi}
\begin{cases}
\frac{1}{4}|x|^2\big(\ln(|x|^2\mu^2)-2\big)+
\frac{1}{4}\Lambda^{-2}\ln(|x|^2\mu^2),&|x|>1/\Lambda;\\
\,\,\,\,\,\,\,\,\,\,\,\,\,\,\,\,\,\,\,
-\frac{1}{2}(L+1)\Lambda^{-2}-\frac{1}{2}L|x|^2,&|x|\leqslant1/\Lambda,
\end{cases}
\end{equation}
\begin{equation}\label{1:cala:12}
\hat{R}_0(x)=m^{-1}\frac{\partial}{\partial s^2}\bigg|_{s=0}
\rho\Big(\sqrt{A(x)}/\Lambda\Big)R_0(x)=\frac{1}{4\pi}
\begin{cases}
\,\,\,\,\,\,\,\,-\frac{1}{4}\Lambda^{-2}\ln(|x|^2\mu^2),&|x|>1/\Lambda;\\
\frac{1}{2}L\Lambda^{-2}+\frac{1}{4}(\Lambda^{-2}-|x|^2),&|x|\leqslant1/\Lambda.
\end{cases}
\end{equation}
\end{theorem}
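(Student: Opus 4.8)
The plan is to split the statement into an algebraic (covariance) part and an analytic part, the latter reducing to three scalar radial computations that run exactly parallel to Lemmas \ref{1:cal:lem1} and \ref{1:cal:lem2}. The starting observation is that the matrix $m$ in (\ref{1:cala:3}) is a constant multiplication operator commuting with $A(x)=-\partial_{x_\mu}\partial_{x^\mu}$, so with $\hat M(x)=A(x)-s^2m$ from (\ref{1:cala:4}) and the even series $\rho(r)=J_0(r)=\sum_{n\geqslant0}(-1)^n(r/2)^{2n}/(n!)^2$ the operator $\rho(\sqrt{\hat M}/\Lambda)$ is a genuine power series in $\hat M/\Lambda^2$. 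Expanding $\hat M^n=(A-s^2m)^n=A^n-s^2mnA^{n-1}+o(s^2)$ and resumming, I find that the $s^2$-coefficient of $\rho(\sqrt{\hat M}/\Lambda)$ equals $m$ times the operator $\tfrac{1}{2\Lambda\sqrt A}J_1(\sqrt A/\Lambda)=\Lambda^2A^{-1}\tfrac{\partial}{\partial\Lambda^2}\rho(\sqrt A/\Lambda)$, the two-dimensional analogue of (\ref{1:cal:27})--(\ref{1:cal:28}). Multiplying this $s^2$-expansion of the operator by the Green's function expansion $\hat G=R_0+s^2mR_1+o(s^2)$ from (\ref{1:cala:5}) and collecting orders immediately produces the three-term structure of (\ref{1:cala:8}), with $\tilde R_0=\rho(\sqrt A/\Lambda)R_0$, $\tilde R_1=\rho(\sqrt A/\Lambda)R_1$ and $\hat R_0$ as in (\ref{1:cala:12}).

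For $\tilde R_0$ I would apply the master identity (\ref{1:cala:15}) in dimension $d=2$. Since $R_0=-\ln(|x|^2\mu^2)/4\pi$ from (\ref{1:cala:6}) differs from the fundamental solution $G(|x|)$ of (\ref{Green}) only by an additive constant, which lies in $\ker A$ and is fixed by $\rho(0)=1$, the identity gives $\tilde R_0=-\ln(|x|_\Lambda^2\mu^2)/4\pi$, that is (\ref{1:cala:10}). For $\tilde R_1$ the key point is that $\rho(\sqrt A/\Lambda)$ commutes with $A$, so applying it to the recurrence $A(x)R_1=R_0$ of (\ref{1:cala:7}) yields $A(x)\tilde R_1=\tilde R_0$. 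I would then solve this piecewise Poisson problem by integrating the radial equation in the two regions $|x|>1/\Lambda$ and $|x|<1/\Lambda$, fix the integration constants by continuity of $\tilde R_1$ and of its gradient at $|x|=1/\Lambda$ (the analogue of (\ref{1:cal:21})), and pin the last additive constant by the value-at-origin condition of type (\ref{1:cal:22}); this reproduces (\ref{1:cala:11}).

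The third function $\hat R_0$ is where I expect the real work, following Lemma \ref{1:cal:lem2}. Using the rewriting of the $s^2$-coefficient operator above, one has $\hat R_0=\Lambda^2A^{-1}\tfrac{\partial}{\partial\Lambda^2}\rho(\sqrt A/\Lambda)R_0=\Lambda^2\tfrac{\partial}{\partial\Lambda^2}\tilde R_1+g(\Lambda)$, where commuting $A^{-1}$ past $\tfrac{\partial}{\partial\Lambda^2}$ introduces an $x$-independent integration term $g(\Lambda)$ annihilated by $A$. Differentiating the already-determined $\tilde R_1$ of (\ref{1:cala:11}) with respect to $\Lambda^2$ (remembering that $L=\ln(\Lambda/\mu)$ itself depends on $\Lambda$, with $\partial L/\partial\Lambda^2=1/(2\Lambda^2)$) yields precisely the stated right-hand side of (\ref{1:cala:12}), and the crux is to show $g(\Lambda)=0$. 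The main obstacle is exactly this constant-fixing: I would evaluate $\hat R_0$ at the origin through the momentum representation of $R_0$ and pass to radial coordinates, as in (\ref{1:cal:32}) and (\ref{1:cal:45})--(\ref{1:cal:48}). The subtlety peculiar to $d=2$ is that the relevant radial integral $\int_0^{+\infty}J_1(r)r^{-2}\,dr$ carries an infrared logarithm, so it must be treated with the scale $\mu$ kept explicit; it is this logarithm that produces the factor $L$ in (\ref{1:cala:12}), and matching it against $\Lambda^2\partial_{\Lambda^2}\tilde R_1|_{x=0}$ forces $g(\Lambda)=0$.

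Finally, the covariance relation (\ref{1:cala:9}) needs no analysis beyond the conjugation mechanism of (\ref{1:int5}). From (\ref{1:cala:4}) we have $M(x)=\Phi_0(x-y)\hat M(x)\Phi_0(y-x)$ with $\Phi_0(y-x)=\Phi_0(x-y)^{-1}$; since $\rho(\sqrt{\,\cdot\,}/\Lambda)=J_0(\sqrt{\,\cdot\,}/\Lambda)$ is an even power series and hence a power series in the operator itself, conjugation by the invertible multiplication operator $\Phi_0$ commutes with the functional calculus, giving $\rho(\sqrt M/\Lambda)=\Phi_0(x-y)\rho(\sqrt{\hat M}/\Lambda)\Phi_0(y-x)$. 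Applying this to $G=\Phi_0(x-y)\hat G$ from (\ref{1:cala:5}) and cancelling the inner factor $\Phi_0(y-x)\Phi_0(x-y)=\mathrm{Id}$ gives (\ref{1:cala:9}). Assembling (\ref{1:cala:8})--(\ref{1:cala:12}) then completes the theorem; the only genuinely delicate point throughout is the infrared-sensitive constant in the computation of $\hat R_0$.
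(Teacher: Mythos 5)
Your proposal is correct and follows essentially the same route as the paper: expand $\rho\big(\sqrt{\hat M}/\Lambda\big)$ to order $s^2$ using $\hat M^n=A^n-s^2mnA^{n-1}+o(s^2)$ (the two-dimensional analogue of (\ref{1:cal:27})--(\ref{1:cal:28})), identify $\tilde R_0$ with $G(|x|_\Lambda)$ via (\ref{1:cala:15}), solve $A\tilde R_1=\tilde R_0$ piecewise with the matching conditions (\ref{1:cala:13}), obtain $\hat R_0=\Lambda^2\partial_{\Lambda^2}\tilde R_1+g(\Lambda)$, and get (\ref{1:cala:9}) by conjugating the functional calculus with $\Phi_0$. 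The only place where you diverge from the paper is the fixing of $g(\Lambda)$: you propose evaluating $\hat R_0$ at the origin through the momentum representation, which in $d=2$ runs into the log-divergent integral $\int_0^{\infty}J_1(r)r^{-2}\,dr$ and forces you to handle the distributional ($\mu$-dependent) Fourier transform of the logarithm --- workable, but delicate, and you leave it at the level of a plan. The paper's mechanism (stated for $\tilde R_1$ in this proof and for $\bar R_1$ in Lemma \ref{1:cal:lem2}) avoids this entirely: in the region $|x|>1/\Lambda$ the function $R_0$ is harmonic on a neighbourhood of the homogenization sphere, so $A^kR_0=0$ for $k\geqslant1$ there and the operator series for the $s^2$-coefficient truncates to its first term, giving $\hat R_0(x)=R_0(x)/(4\Lambda^2)$ exactly for $|x|>1/\Lambda$; comparing with $\Lambda^2\partial_{\Lambda^2}\tilde R_1$ in that region pins $g(\Lambda)=0$ with no divergent integral. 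I would recommend replacing your origin-evaluation step by this large-$|x|$ comparison; otherwise the argument is complete and consistent with the paper's.
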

\begin{proof} The statement from (\ref{1:cala:9}) follows from formula (\ref{1:cala:8}) with the use of the second equality from (\ref{1:cala:5}). Hence, we need to verify only (\ref{1:cala:8}), which, actually, can be obtained by the same steps that were performed in the previous section. Let us proceed in stages.
	
Firslty, formula (\ref{1:cala:10}) is the consequence of Lemmas \ref{fur1}, \ref{fur3}, and \ref{fur2}, because in the two-dimensional space $R_0(x)$ coincides with $G(|x|)$ from (\ref{Green}), and $\tilde{R}_0(x)$ coincides with $G(|x|_\Lambda)$.

Secondly, formula (\ref{1:cala:11}) can be obtained using the second equality from (\ref{1:cala:7}) and an explicit integration with the use of the following smoothness conditions
\begin{equation}\label{1:cala:13}
\tilde{R}_1(x)\Big|_{|x|=1/\Lambda-0}=\tilde{R}_1(x)\Big|_{|x|=1/\Lambda+0},\,\,\,
\partial_{x_\mu}\tilde{R}_1(x)\Big|_{|x|=1/\Lambda-0}=\partial_{x_\mu}\tilde{R}_1(x)\Big|_{|x|=1/\Lambda+0}.
\end{equation}
Then, fixing the integration constant by comparing asymptotics in the smooth region $|x|\gg 1/\Lambda$, we get the result formulated above. Finally, repeating all the last calculations with the use of (\ref{1:cal:28}), where we have substituted $s^2m$ instead of $2s\alpha\xi_{\mu\nu}$, we get the last statement of the theorem.
\end{proof}

\section{Conclusion}
\label{1:sec:conc}

In this paper, we have studied the explicit cutoff regularization in the coordinate representation, see (\ref{1:int3}). We constructed it in such a way that three additional conditions would be satisfied: the spectral representation, the homogenization (special type of an integral representation), and the covariance, if it exists before the regularization introduced. All these restrictions were achieved on the example of the standard Laplace operator $-\partial_{x_\mu}\partial_{x^\mu}$ for an arbitrary dimension value, see formula (\ref{1:cala:15}), and on the example of two auxiliary operators, appearing in the four-dimensional Yang--Mills theory and in the two-dimensional Sigma-model, see Section \ref{1:sec:fur} and Theorems \ref{1:cal:th1} and \ref{1:cal:th2}.

Of course, we need to give some comments about a behavior of (\ref{1:cal:51})--(\ref{1:cal:52}) and (\ref{1:cala:8}) with respect the auxiliary parameter $s$. Indeed, we have not proven any result about their convergence, so we supposed that they are asymptotic. Finding an operator norm, with respect to which the convergence would be valid, is a separate task and this is out of the scope of our paper.

In our opinion, these results can be useful in  multi-loop calculations in the case of the pure four-dimensional Yang--Mills theory and the two-dimensional Sigma-model. The answers described above contain all the necessary terms of the infrared expansion, needed for studying of divergent parts.

Let us note that at the very beginning of the paper, we have fixed the view of the regularization under study, see (\ref{1:cut1}). This restriction is very strong, because it controls the form of the spectral function $\rho(r)$, see formulae (\ref{1:int1}) and (\ref{ffa1}). Actually, we can abandon this restriction and define the spectral function in an arbitrary convenient way, not for the explicit cutoff regularization. For example, we can take $\rho(r)=\exp(-r^2)$, which does not depend on the dimension value. Anyway, each particular case should be studied separately, and this problem is the subject of further research.

Additionally, it would be useful to expand the result on the case of potentials $v(x)$, depending on the variable $x$, see (\ref{1:cal:1}). Such generalizations make calculations, appearing in the multi-dimensional cases, to be more convenient and simple. Fortunately, the main standard models can be investigated using the methods described above. For example, we can study a $\phi^3$-model in the six-dimensional space or a $\phi^4$-model in the four-dimensional space, which lead to the Laplace operator $-\partial_{x_\mu}\partial_{x^\mu}-v(x)$ with smooth potential, depending on $x$. In such situations, we can consider only several terms from the Taylor expansion of $v(x)$.

\paragraph{Acknowledgements.}
The author is grateful to Natalia Kharuk for thoughtful reading of the manuscript, comments, and collaboration on related topics.
This research is supported by the Ministry of Science and Higher Education of the Russian Federation, agreement  075-15-2022-289.

\end{document}